\newcolumntype{P}[1]{>{\centering\arraybackslash}p{#1}}
\newcolumntype{M}[1]{>{\centering\arraybackslash}m{#1}}
\newtheorem{theorem}{Theorem}
\newtheorem{assumption}{Assumption}
\newtheorem{lemma}{Lemma}
\newtheorem{corollary}{Corollary}
\newtheorem{remark}{Remark}
\newtheorem{example}{Example}
\begin{document}
\newcommand{\ggll}{\mathrel{\substack{>\\[-.05em]<}}}
\title{Low-complexity Distributed Detection with One-bit Memory Under Neyman-Pearson Criterion}
%
%
%

\author{Guangyang Zeng,~\IEEEmembership{Student Member,~IEEE,}
       Xiaoqiang Ren,~\IEEEmembership{Member,~IEEE,}
        and~Junfeng~Wu,~\IEEEmembership{Senior Member,~IEEE}
\thanks{Some preliminary results of the current manuscript were briefly presented at the 56th Annual Allerton Conference on Communication, Control, and Computing in Urbana-Champaign, USA, 2018~\cite{zeng2018novel}.}
\thanks{G. Zeng and J. Wu are with the College of Control Science and Engineering, State Key Laboratory of industrial control and technology, Zhejiang University, P.R.China. Email: \{gyzeng,jfwu\}@zju.edu.cn.}
\thanks{X. Ren is with the School of Mechatronic Engineering and Automation, Shanghai
University, Shanghai, P.R. China, Email:xqren@shu.edu.cn.}}
\maketitle

\begin{abstract}
We consider a multi-stage distributed detection scenario, where $n$ sensors and a fusion center (FC) are deployed to accomplish a binary hypothesis test. At each time stage, local sensors generate binary messages, assumed to be spatially and temporally independent given the hypothesis, and then upload them to the FC for global detection decision making. We suppose a one-bit memory is available at the FC to store its decision history and focus on developing iterative fusion schemes. We first visit the detection problem of performing the Neyman-Pearson (N-P) test at each stage and give an optimal algorithm, called the oracle algorithm, to solve it.
Structural properties and limitation of the fusion performance in the asymptotic regime are explored for the oracle algorithm. We notice the computational inefficiency of the oracle fusion and propose a low-complexity alternative, for which the likelihood ratio (LR) test threshold is tuned in connection to the fusion decision history compressed in
the one-bit memory. The low-complexity algorithm greatly brings down the computational complexity at each stage from $O(4^n)$ to $O(n)$. We show that the proposed algorithm is capable of converging exponentially to the same detection probability as that of the oracle one.
Moreover, the rate of convergence is shown to be asymptotically identical to that of the oracle algorithm. Finally, numerical simulations and real-world experiments demonstrate the effectiveness and efficiency of our distributed algorithm.

\end{abstract}

\begin{IEEEkeywords}
Distributed detection; Binary detection; Neyman-Pearson; Computational complexity
\end{IEEEkeywords}

%
\IEEEpeerreviewmaketitle

\section{Introduction} \label{Introduction}
%
%
%
%
\IEEEPARstart{W}{ith} the booming development of wireless sensor networks, distributed detection has gained its flourishing as it shows great advantages on expanding space coverage, increasing reliability, enhancing system viability, etc.~\cite{tenney1981detection,chair1986optimal,tsitsiklis1989decentralized,viswanathan1997distributed,blum1997distributed,chamberland2007wireless,varshney2012distributed,drakopoulos1991optimum,kam1992optimal}. For a literature review, one may refer to~\cite{viswanathan1997distributed,blum1997distributed,chamberland2007wireless}. We consider the parallel distributed detection architecture, in which there are a number of local sensors and a fusion center (FC). The sensors are spatially distributed and supposed to have computation capacity by which local decisions are made according to their observations. These local decisions are then transmitted to the FC, based on which the global decision is made following some fusion rules.

Distributed detection problems are generally to design rules, including the local decision rules and the fusion rule, through which the system performance, with respect to some objectives, is maximized. Most literature has assumed the observations among sensors are conditionally independent given each hypothesis for tractability~\cite{drakopoulos1991optimum,kam1992optimal,yan2001distributed,chamberland2006dense,khalid2011cooperative,veeravalli2012distributed}. Otherwise the problem is generally intractable, e.g., it is NP-hard under the Neyman-Pearson (N-P) formulation~\cite{tsitsiklis1985complexity}. Under the conditional independence assumption, the optimal decision rules at the sensors as well as at the FC are threshold rules based on likelihood ratios (LR) under both Bayesian and N-P formulations~\cite{viswanathan1997distributed}. We note that although the optimality of LR tests has been established, calculating their optimal thresholds is usually computationally complicated. Even if the decision rules at sensors have been fixed, the computational cost to find an optimal fusion rule is exponentially increasing with respect to sensor number~\cite{tsitsiklis1989decentralized}. 
Hence, reducing computational complexity is a widely studied topic in distributed detection. To this aim, the asymptotic regime for which a large number of sensors are involved is a good choice~\cite{tsitsiklis1988decentralized,chamberland2004asymptotic,kreidl2010decentralized,dong2017detection,tay2012value}, as in this case, if the sensors are i.i.d., it is asymptotically optimal to adopt the same decision rule at the local sensors, based on which the optimal fusion strategy yields the voting rule. In these works, the optimal error exponent is derived as Chernoff information in the Bayesian formulation or KL divergence in the N-P formulation. There are also some suboptimal methods that pursue suboptimal solutions with relatively low complexity~\cite{hoballah1989distributed,zhang2002optimal,quan2009optimal,rahaman2017low}. Although they may be optimal in some artificially specified conditions, e.g., Quan \emph{et al.}~\cite{quan2009optimal} gave the optimal \emph{linear} fusion strategy, they are not optimal in the global sense. 

In this paper, we are also devoted to devising an efficient algorithm for the scenario of interest. A multi-stage distributed detection problem, which consists of multiple time stages, is considered. The system diagram is shown in Fig.~\ref{System_architecture}. At the $k$-th stage, each sensor will send a quantized message $u_i^k$ (binary-valued in this paper) to the FC at which the binary detection result $u_0^k$ is produced to decide between two hypotheses ($H_0$ and $H_1$)\footnote{We require that the FC gives its decision at every stage to support the potential subsequent course of action or higher-level fusion procedure.}. Using the N-P criterion, we constrain that the false alarm probability of $u_0^k$ cannot exceed a prescribed value. To alleviate system complexity, we will make two efforts. On the one hand, we suppose that a one-bit memory is available at the FC to compress historical sensor messages. In this manner, both computational and memory complexity will not increase with the time stage going large. Nevertheless, we will show (in Section~\ref{oracle_optimal_algorithm}) that the calculation of the optimal thresholds is computationally inefficient. Hence, on the other hand, we are also devoted to seeking a low-complexity LR threshold generation method, based on which the detection performance of the modified algorithm does not degrade much. 
\begin{figure}[htbp]
	\centering
	\includegraphics[width=0.48\textwidth]{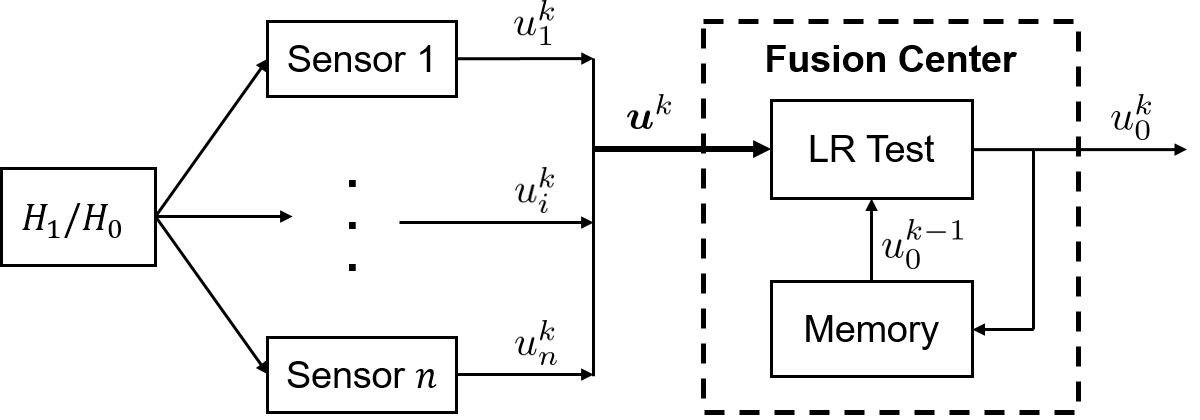}
	\caption{System diagram of distributed  hypothesis testing in a sensor network. The sensors transmit quantized messages to the FC at each time. The FC makes a global decision in an online manner, fusing its previous decision and the sensors' local messages.}
	\label{System_architecture}
\end{figure} 

\subsection{Related literature}
A series of works by Cover \emph{et al.}~\cite{cover1969hypothesis,hellman1970learning,cover1976optimal} investigated hypothesis testing with finite memory where the $m$-valued ($m \geq 2$) statistic at present is recursively updated based on the statistic of the last stage and the present observation. The properties of the asymptotic\footnote{In the subsequence of this paper, without specific illustration, ``asymptotic'' means that the time stage goes to infinity.} probability of error under different cases (unbounded and bounded LR) are characterized. It is noteworthy that these works focus on the asymptotic case. Therefore the transient constraint on the false alarm probabilities cannot be guaranteed. Moreover, due to the Bayesian formulation, the asymptotic false alarm probability may also fail to satisfy the constraint in the case of bounded LR. 
Sequential detection is a widely studied problem, which consists of multiple stages. In most of the literature, e.g.,~\cite{veeravalli1993decentralized,krishnamurthy2011bayesian,leonard2018robust}, the detection result is made only when some criterion is satisfied (which is called the stopping time). The general result is a random wait for the final decision about the true phenomenon.
We note that the system diagram in Fig.~\ref{System_architecture} can be converted into a serial (or tandem) detection structure. In serial topology~\cite{viswanathan1988optimal,tang1991optimization,tarighati2016decentralized}, the $k$-th sensor makes a decision based on its own observation and the decision of the $(k-1)$-th sensor. In these problems, the N-P criterion is applied to the last sensor which gives the final decision. In order to calculate the optimal threshold at each sensor, it is assumed that the observations are continuously distributed~\cite{tang1991optimization}. Thus, applying the N-P criterion at each stage where the observation of the FC $\bm u^k=[u_1^k,\ldots,u_n^k]$ is discrete (in this paper) can be viewed as a supplement to the canonical serial distributed detection. 
Using the previous decision to improve the current one resembles the ``unlucky broker problem''~\cite{marano2010refining,marano2010bayesian}. However, in that problem, no new observation is available, and the decision maker needs to refine the test by exploiting the previous decision and part of the historical observation. 
Kam \emph{et al.}~\cite{kam1999performance} investigated the same system diagram as ours where a one-bit memory is supposed at the FC to compress historical sensor messages. At each stage, the Bayesian criterion is applied to minimize the probability of error. The gap between the Bayesian setting and the N-P one is nontrivial in terms of both problem formulation and analysis method. When the apriori probability of the underlying hypothesis and the cost of each course of action are unknown, or there is a constraint on the false alarm probability, the N-P formulation is required. We note that the performance analysis is conducted based on the geometric interpretation of the iteration functions of the detection and false alarm probabilities in~\cite{kam1999performance}, while we utilize the property of the receiver's operating characteristic (ROC) curve of the canonical memoryless system. Moreover, the asymptotic properties between the two problems are also different (see Remark~\ref{asymptotic_comparison}). 

\subsection{Main results} 
Considering the system diagram in Fig.~\ref{System_architecture}, we require that the FC gives its binary decision at every stage to support the potential subsequent course of action or higher-level fusion procedure. The N-P criterion is applied to set a prescribed bound for the transient false alarm probability at each stage. We assume that only a one-bit memory is available in the FC to compress local messages for the following considerations. From the perspective of practical applications, in the scenario where the FC serves as a relay in a multi-hop network or a hierarchical network, it is expected that the resources, including memory and computation capacity, are constrained in the FC. If the FC is battery-powered, the one-bit memory setting with relatively low computational complexity can increase the battery life. In addition, theoretically speaking, to the best of our knowledge, the one-bit memory case has not been studied yet in our problem setup. We believe that the full characterization of the one-bit problem, which is the extreme case of finite memory setting, will promote the investigation of the multi-bit memory situation. 
Based on the one-bit memory architecture, in Section~\ref{oracle_optimal_algorithm}, we give the optimal algorithm through which the detection probability is sequentially maximized at each stage. We note that the transient detection probability is not analytically available, then based on some results in~\cite{xiang2001performance},~\cite{zhao2007performance}, we derive the asymptotic detection probability. Nevertheless, the oracle optimal algorithm is computationally inefficient, which has an exponential (in terms of the number of sensors) computational complexity at each stage. 
Therefore, we further turn to optimize the asymptotic detection probability instead of the transient one and propose an asymptotically optimal algorithm with low complexity. 
The low-complexity algorithm was first proposed in our previous work~\cite{zeng2018novel}, where the one-bit memory is used to adaptively select the thresholds of the LR tests. We have shown that the detection and false alarm probabilities of the FC converge. In addition, the optimal alternative thresholds have been derived for the case of two homogeneous sensors\footnote{For homogeneous sensors, their outputs have identical distribution given each hypothesis, which does not hold for heterogeneous sensors.}. However, the problem of performing an N-P test at every stage to optimize transient detection metrics has not been considered, and the development of theoretical counterparts for heterogeneous sensor networks of any size is absent. In this paper, we are going to tackle these problems. In summary, the main contributions are as follows:
\begin{enumerate}
\item [$(i).$] We visit the problem of performing an N-P test at each stage with a one-bit memory and devise an optimal algorithm, called the oracle algorithm, to solve it. Then we focus on analyzing the value of the additional one-bit information. Since the fusion threshold and the detection probability at each stage are not available analytically, we turn to study the asymptotic detection performance, proving the convergence of the detection probability and deriving its asymptotic value (Theorem~\ref{oracle_performance}).

\item [$(ii).$] We notice the computational inefficiency of the oracle fusion and propose a low-complexity fusion policy, greatly bringing down the computational complexity at each stage from $O(4^n)$ to $O(n)$. In this algorithm, the threshold of the LR test is selected from two pre-calculated values based on the fusion decision history compressed in the one-bit memory. We show that by using the proposed algorithm, the detection probability converges at an exponential rate. Moreover, it achieves exactly the same asymptotic detection probability as that of the oracle fusion by proper parameter selection (Theorem~\ref{possible_t1}).

\item [$(iii).$] We discuss the relations between the proposed and the oracle optimal algorithms. The devising of the low-complexity detection rule is enlightened from the converging of the oracle test rule. The oracle algorithm converges to a stationary rule that is identical in form to the proposed low-complexity one. From this perspective, the proposed policy can be treated as an approximation of the oracle one (see Remark~\ref{relationship_of_algorithms}). In addition, we prove that both algorithms converge exponentially at an asymptotically identical rate (Theorem~\ref{convergence_sped}).
\end{enumerate}

The rest of the paper is organized as follows. In Section~\ref{Architecture_problem_formulation}, we introduce the system architecture and formulate our problem. In Section~\ref{preliminaries}, we give some preliminaries on the classical one-stage distributed detection, which facilitates the subsequent multi-stage analysis. In Section~\ref{oracle_optimal_algorithm}, we give the oracle optimal algorithm to sequentially maximize the transient detection probability at each stage and derive its asymptotic value. In Section~\ref{the_proposed_algorithm}, we propose a low-complexity algorithm to maximize the asymptotic detection probability instead of the transient one and make a thorough comparison with the oracle optimal algorithm. Simulation and real experiment results are presented in Section~\ref{simulation_experiment_results}, followed by remarks in Section~\ref{conclusion}.

\section{Problem formulation} \label{Architecture_problem_formulation}
\subsection{System Architecture}
The system diagram considered in this paper is shown in Fig.~\ref{System_architecture}. We regard an event detection problem as a binary hypothesis test problem, where $H_0$ and $H_1$ denote the null hypothesis (i.e., the event is absent) and the alternative hypothesis (i.e., the event is present), respectively. There are 
 $n$ sensors making observations of a common real event $H$ over time. Throughout this paper, we make the following assumptions:
\begin{assumption} \label{unchanged_true_hypothesis}
	The true common event $H$ remains unchanged over time. 
\end{assumption} 
The sensors are all binary sensors. At the $k$-th time stage, the $i$-th sensor produces a binary message $U_i^k$ with\footnote{We use the superscript $k$ to denote the $k$-th stage. For the $k$-th power function, we will add the brackets, i.e., $()^k$.}
\begin{equation} \label{local_performance}
\begin{split} 
& P(U_i^k=1 \mid H_1)=p_i,~P(U_i^k=0 \mid H_1)=1-p_i, \\
& P(U_i^k=1 \mid H_0)=q_i,~P(U_i^k=0 \mid H_0)=1-q_i,
\end{split} 
\end{equation}
where $p_i,q_i \in (0,1)$ are known \emph{a priori}.
\begin{assumption}  \label{conditional independence}
Under either hypothesis, the random variables $U_i^k$ are spatially and temporally independent. 
\end{assumption}
\begin{remark} \label{binary_sensor}
	We refer to any sensor whose output is binary as a binary sensor. It can be the case that the observation of the sensor itself is binary. Or the observation may be arbitrary, and a local decision rule is applied to produce a binary result. 
\end{remark}
At each time stage $k$, sensor $i$ obtains a realization $u_i^k$ of the random variable $U_i^k$, and sends the message to a FC through an error-free channel. The FC receives local messages and fuses them into a global decision in an online manner, regarding the presence or absence of an event in the area. In addition, the FC has a one-bit memory to store the previous global decision $u_0^{k-1}$, and then at time $k$ it makes a new online decision $u_0^k=\Gamma^k(u_1^k,\ldots,u_n^k,u_0^{k-1}) \in \{0,1\}$, where $\Gamma^k:\{0,1\}^{n+1} \rightarrow \{0,1\}$ is referred to as the fusion rule of the FC. 

\subsection{Problem of Interest}

In this paper, the FC is supposed to give a binary decision at every stage for some potential subsequent tasks. We consider the N-P criterion, with the purpose of maximizing the detection probability at each stage under the constraint that the false alarm probability does not exceed a prescribed bound. Formally, the problem is formulated as follows: find a sequence of fusion rules $\bm{\Gamma}=\{\Gamma^1,\Gamma^2,\ldots\}$ such that the following problem is solved for each time $k$:
\begin{equation} \label{N-P_test_original}
\begin{aligned} 
\mathop{\rm{maximize}}_{\bm \Gamma} \quad & p_0^k(\bm \Gamma) \\
{\rm subject~to} \quad & q_0^k(\bm \Gamma) \le \alpha,
\end{aligned}
\end{equation}
where $p_0^k$ and $q_0^k$ are the detection probability and false alarm probability of $u_0^k$, respectively, and $\alpha \in [0,1]$ is the prescribed bound on the false alarm probability. {\it The detection probability is the probability that $H_1$ is declared when the true hypothesis is $H_1$, and the false alarm probability is the probability that $H_1$ is declared when the true hypothesis is $H_0$.} For problem~\eqref{N-P_test_original}, we focus on the following two issues:
\begin{enumerate}
	\item [$(i).$] We are interested in the optimal fusion rules $\bm \Gamma$ for problem~\eqref{N-P_test_original} under the system diagram with a one-bit memory in Fig.~\ref{System_architecture}. In particular, we wonder how much detection performance improvement (compared with memoryless system structure) can be made by the one-bit memory. 
	
	\item [$(ii).$] We are interested in seeking a low-complexity fusion algorithm whose detection probability does not degrade much compared to the optimal algorithm (e.g., it owns the asymptotically optimal property) while satisfying the constraint on the false alarm probability.
\end{enumerate}

Before proceeding, we will first introduce in Section~\ref{preliminaries} some preliminaries on the canonical one-stage distributed detection under the N-P formulation, based on which we will further investigate the multi-stage distributed detection in Sections~\ref{oracle_optimal_algorithm} and~\ref{the_proposed_algorithm}.

\section{Preliminaries: Canonical N-P Distributed Detection} \label{preliminaries}
We introduce basic results in the canonical N-P distributed detection. We consider a sensor network consisting of $n$ sensors and a FC. Each sensor $i$ makes a binary decision $u_i$ and sends it to the FC. The fusion rule $\Gamma$ at the FC is a mapping from $[u_1,\ldots,u_n]$ to $0$ or $1$. Under the N-P criterion, the global detection probability $p_0 $ is to be maximized with respect to the fusion rule, subject to a prescribed global false alarm probability $q_0$: 
\begin{equation} \label{N-P_test_one_stage}
\begin{aligned} 
\mathop{\rm maximize}_{\Gamma} \quad & p_0(\Gamma) \\
{\rm subject~to} \quad & q_0(\Gamma) \le \alpha.
\end{aligned} 
\end{equation}
\begin{remark} \label{one_multiple_stage}
In multi-stage  distributed detection, at each time problem~\eqref{N-P_test_original} solves a problem in the form of~\eqref{N-P_test_one_stage}. In other words, a multi-stage distributed detection problem~\eqref{N-P_test_original} comprises a series of one-stage problems.
\end{remark}
The N-P lemma~\cite{van2004detection} points out that the optimal fusion rule of~\eqref{N-P_test_one_stage} is in terms of the likelihood ratio (LR) test:
\begin{equation} \label{LR_test_0_one_stage}
\Lambda:=  \frac{P(\bm {u}|{H_1})}{P(\bm {u}|{H_0})}=
\frac{\prod_{i=1}^{n} P(u_i|H_1)}{\prod_{i=1}^{n} P(u_i|H_0)}
\mathop {\ggll}\limits_{H_0}^{{H_1}}  (t,\lambda), 
\end{equation}
where $\Lambda$ is called the likelihood ratio, $\bm {u}=[u_1,\ldots,u_n]$, and $\Lambda \mathop {\ggll}\limits_{H_0}^{{H_1}} (t,\lambda)$ means that:
\begin{equation}\label{greater_less}
u_0=
\begin{cases}
1, &\text{if } \Lambda>t, \\
0, &\text{if } \Lambda<t,\\
1 ~\text{with probability} ~\lambda, &\text{if } \Lambda=t, \\
\end{cases}
\end{equation}
with $u_0$ being the global decision. In~\eqref{LR_test_0_one_stage} the equality holds due to Assumption~\ref{conditional independence}. Consequently, problem~\eqref{N-P_test_one_stage} can be equally recast as the following optimization problem: 
\begin{equation} \label{N-P_test_onestage}
\begin{aligned} 
\mathop{\rm maximize}_{t\in \mathbb R_+,\lambda\in [0,1]} \quad & p_0(t, \lambda) \\
{\rm subject~to} \quad & q_0(t, \lambda) \le \alpha,
\end{aligned}
\end{equation}
where $\mathbb R_+$ stands for the set of positive reals. Since $u_i$'s are all binary, the likelihood ratio $\Lambda$ can have at most $2^n$ possible values. The distributions of $\Lambda$ conditioned on $H_0$ and $H_1$ are illustrated in Fig.~\ref{Probability_LR}. Let the finite set $\bf \Lambda$ denote the assemble of the possible values of $\Lambda$. Then based on the LR test~\eqref{LR_test_0_one_stage}, the detection and false alarm probabilities can be calculated via the relation:
\begin{equation}\label{detection_probability_calculation}
p_0(t, \lambda)=\sum_{\Lambda \in {\bf \Lambda}: \Lambda>t} P(\Lambda|H_1)+\lambda \sum_{\Lambda \in {\bf \Lambda}: \Lambda=t} P(\Lambda|H_1), 
\end{equation}
and 
\begin{equation}\label{falsealarm_probability_calculation}
q_0(t, \lambda)=\sum_{\Lambda \in {\bf \Lambda}: \Lambda>t} P(\Lambda|H_0)+\lambda \sum_{\Lambda \in {\bf \Lambda}: \Lambda=t} P(\Lambda|H_0). 
\end{equation}

It is well-known that $p_0$ reaches its maximum when $q_0 = \alpha$~\cite{van2004detection}, i.e., to optimally solve the problem~\eqref{N-P_test_onestage}, we need to obtain a pair of $(t,\lambda)$ satisfying $q_0(t, \lambda)=\alpha$. However, this procedure is quite tedious, especially when the number of sensors is large. 
What we  do first is to sort the likelihood ratio $\Lambda$ in non-decreasing order, denoted as $\Lambda_1 \le\Lambda_2 \le \dots \le \Lambda_{2^{n}}$ without loss of generality. Then we find the smallest index $j $ such that
\begin{equation} \label{find_optimal_threshold}
\sum_{\Lambda \in {\bf \Lambda}: \Lambda>\Lambda_j} P(\Lambda|H_0) \le \alpha
\end{equation}and denote the corresponding $\Lambda_j$ as $\Lambda_*$. Finally, the optimal threshold $t$ and the random factor $\lambda$ are computed as follows:
\begin{equation} \label{oracle_threshold}
\begin{cases}
t=\Lambda_*, \\
\lambda=\frac{\alpha-\sum_{\Lambda \in {\bf \Lambda}: \Lambda>\Lambda_*} P(\Lambda|H_0)}{P(\Lambda_*|H_0)}.
\end{cases}       
\end{equation}In general, we do not have an explicit expression for the optimal $p_0$ in terms of $\alpha$, while alliteratively we can have an implicit form via~\eqref{oracle_threshold} and~\eqref{detection_probability_calculation}.

Before proceeding, we introduce a metric for each sensor~\cite{xiang2001performance}:
\begin{equation}\label{new_performance_metric}
R_i=\frac{p_i}{1-p_i}\frac{1-q_i}{q_i}.
\end{equation}
\begin{remark}
We notice that the $R_i$ in~\eqref{new_performance_metric} can be read as an odds ratio~\cite{szumilas2010explaining}. It reflects how sensitive a detection decision maker in question is to the two different hypothesises $H_0$ and $H_1$ in environment. When $R_i<1$, i.e., $p_i<q_i$, it means that the local detector at sensor $i$ is counterproductive, and vice versa.  While $R_i=1$ implies that sensor $i$ is blind to the binary hypothesis.
\end{remark}
In this paper, we consider all the local decision makers as productive by making Assumption~\ref{productive_sensors}. Actually, if some of the sensors are counterproductive, we can make them be productive by flipping their outputs of $0$ and $1$.

\begin{assumption} \label{productive_sensors}
For each sensor $i$, we assume that $R_i>1$, that is, $p_i>q_i$.
\end{assumption}
In the binary detection problem, the ROC curve is a graphical representation that plots the detection probability versus the false alarm probability at different discrimination settings ($t$ and $\lambda$ in our problem). The following lemma is on properties of the ROC curve of the fusion result:
\begin{lemma}\label{ROC_curve_one_stage}
Consider the FC implementing~\eqref{LR_test_0_one_stage} with the optimal parameters~\eqref{oracle_threshold}. For a fusion of $n$ sensors, we have
	\begin{enumerate}
	\item [$(1)$.] The ROC curve of the FC consists of $2^n$ segments.
	\item [$(2)$.] Denote the slopes of $2^n$ segments in ROC from left to right as $l_1,\ldots,l_{2^n}$, then  $l_1>l_2 \geq \dots \geq l_{2^n-1}>l_{2^n}$.
\end{enumerate}
\end{lemma}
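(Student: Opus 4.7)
The plan is to trace the ROC curve by sweeping the test parameters $(t,\lambda)$ from the LR test~\eqref{LR_test_0_one_stage}--\eqref{greater_less}, starting from $t=+\infty$ (where $p_0=q_0=0$) and letting $t$ descend through the sorted values $\Lambda_{2^n}\ge \Lambda_{2^n-1}\ge \cdots \ge \Lambda_1$, with $\lambda$ sweeping from $0$ to $1$ at each threshold. When $t$ is held fixed at $\Lambda_j$, formulas~\eqref{detection_probability_calculation}--\eqref{falsealarm_probability_calculation} show that both $p_0$ and $q_0$ are affine in $\lambda$ with $\lambda$-coefficients $P(\Lambda_j\mid H_1)$ and $P(\Lambda_j\mid H_0)$ respectively. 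Hence, as $\lambda$ traverses $[0,1]$, the locus in the $(q_0,p_0)$-plane is a straight segment whose slope equals
\[
\frac{P(\Lambda_j\mid H_1)}{P(\Lambda_j\mid H_0)}=\Lambda_j
\]
by the very definition of the likelihood ratio.

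For part $(1)$, I would index the segments by the sensor configurations $\bm u=(u_1,\ldots,u_n)\in\{0,1\}^n$ rather than by distinct values of $\Lambda$. Each of the $2^n$ configurations contributes a (possibly degenerate) segment, and concatenating them in the order in which the corresponding thresholds are visited produces the full curve from $(0,0)$ to $(1,1)$. When two configurations happen to realize the same value of $\Lambda$, the two segments are collinear but remain logically distinct in this enumeration, which is exactly what part $(1)$ asserts.

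For part $(2)$, the geometric construction above gives $l_j=\Lambda_{2^n-j+1}$, so the weak chain $l_1\ge l_2\ge \cdots \ge l_{2^n}$ is immediate from the non-decreasing ordering of the $\Lambda_j$'s. The strict inequalities at the endpoints, $l_1>l_2$ and $l_{2^n-1}>l_{2^n}$, reduce to showing that the extremal likelihood ratios are uniquely attained. Here Assumption~\ref{productive_sensors} enters decisively: since each $R_i>1$, flipping any coordinate of $\bm u$ from $1$ to $0$ multiplies $\Lambda$ by $1/R_i<1$ and flipping from $0$ to $1$ multiplies it by $R_i>1$, so $\Lambda$ is uniquely maximized at $(1,\ldots,1)$ and uniquely minimized at $(0,\ldots,0)$. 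Consequently, the two extreme segments have slopes strictly separated from their neighbours.

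The main obstacle I anticipate is purely organizational: maintaining a clean separation between the two indexings --- by configuration versus by sorted $\Lambda$ --- and pinning down that the weak inequalities in the middle correspond precisely to coincidences among the interior $\Lambda$ values, while such coincidences are ruled out at the extremes by Assumption~\ref{productive_sensors}. Everything else reduces to bookkeeping on a piecewise-linear curve.
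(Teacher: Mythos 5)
Your proof is correct and follows essentially the same route as the paper's: both identify the slope of each segment with the corresponding likelihood-ratio value taken in sorted order (the paper computes $l_1=\prod_i p_i/\prod_i q_i$, $l_2$, $l_3$ explicitly under the ordering $R_1\ge\cdots\ge R_n$, which are exactly the LR values of the configurations $(1,\ldots,1)$, $(1,\ldots,1,0)$, etc.), and both obtain the strict inequalities at the two ends from Assumption~\ref{productive_sensors} ($p_i>q_i$, i.e., the extremal LR values are uniquely attained). Your observation that the segment slope equals $P(\Lambda_j\mid H_1)/P(\Lambda_j\mid H_0)=\Lambda_j$ simply makes explicit the identity the paper verifies term by term.
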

The proof of Lemma~\ref{ROC_curve_one_stage} is presented in Appendix~\ref{appendix_ROC_curve_one_stage}.

\begin{figure*}[htbp]
	\centering
	\begin{subfigure}{0.8\columnwidth}
		\includegraphics[width=\columnwidth]{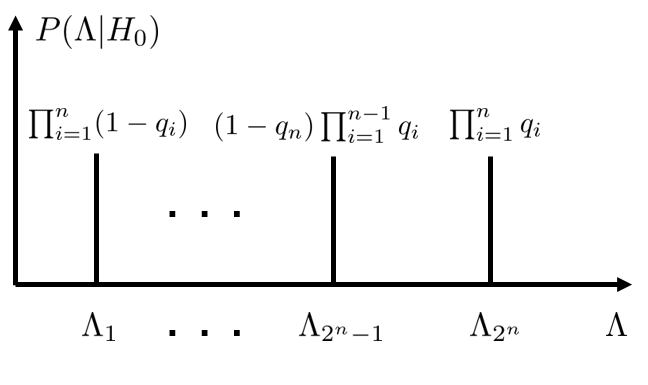}%
		\caption{Distribution of $\Lambda$ conditioned on $H_0$}
		\label{Probability_LR_H0}
	\end{subfigure}
	\hspace{0.1\textwidth}
	\begin{subfigure}{0.8\columnwidth}
		\includegraphics[width=\columnwidth]{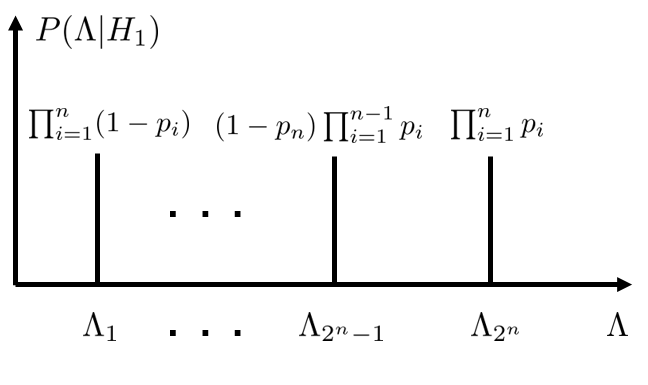}%
		\caption{Distribution of $\Lambda$ conditioned on $H_1$}
		\label{Probability_LR_H1}
	\end{subfigure}\hfill%
	\caption{Distribution of $\Lambda$ in single-stage decision fusion: For $n$ sensors, there are $2^n$ possible likelihood ratio values, denoted as $\Lambda_1 \leq \Lambda_2 \leq \dots \leq \Lambda_{2^n}$. There may be some likelihood ratios having the same value. In this sketch figure, they are separated and assigned probabilities respectively.}
	\label{Probability_LR}
\end{figure*}

Next, we will introduce a lemma on comparison of the detection performances before and after fusion:
\begin{lemma}[Theorem~2~\cite{zhao2007performance}] \label{better_performance_2}
	Consider problem~\eqref{N-P_test_onestage} in a configuration of $n \geq 3$ sensors with $R_1\geq R_2\geq \dots \geq R_n$. By using~\eqref{oracle_threshold}, the following statements are true:
	\begin{enumerate}
		\item [$(1)$.] Let $\alpha=q_1$, we have $p_0>p_1$ if $R_1<\prod_{i=2}^{n} R_i$ and $p_0=p_1$ if $R_1 \geq \prod_{i=2}^{n} R_i$.
		\item [$(2)$.] Let $\alpha=q_i, i \in \{2,\ldots,n\}$, we have $p_0>p_i$.
	\end{enumerate}
\end{lemma}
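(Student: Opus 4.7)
The plan is to compare the optimal Neyman--Pearson fusion of all $n$ sensors against the trivial rule $u_0=u_i$, which achieves the operating point $(q_i,p_i)$. The useful factorization is $\Lambda(u)=(\prod_j b_j)\prod_j R_j^{u_j}$, with $b_j=(1-p_j)/(1-q_j)$, so ranking likelihood ratios reduces to ranking the products $\prod_j R_j^{u_j}$ over sensor subsets, and everything collapses to elementary statements about $R_j$'s.

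First I would characterize exactly when $u_0=u_i$ coincides with a likelihood-ratio threshold test: this holds iff every pattern with $u_i=1$ has LR no smaller than every pattern with $u_i=0$, i.e.\ iff $R_i\geq\prod_{j\neq i} R_j$ (it suffices to compare the extremes $u=e_i$ and $u=\mathbf{1}-e_i$, since $R_j>1$). For $i=1$ this is exactly the dichotomy in part~(1), and in the case $R_1\geq\prod_{j\geq 2}R_j$ the optimality of the LR test immediately gives $p_0=p_1$. For $i\geq 2$, since $n\geq 3$, Assumption~\ref{productive_sensors} yields $\prod_{j\neq i} R_j \geq R_1\cdot\min_{j\neq 1,i} R_j > R_1 \geq R_i$, so the condition fails strictly and we must certify a strict improvement by fusion.

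Whenever the single-sensor rule fails to be an LR test, there exist patterns $u^{(a)}$ with $u^{(a)}_i=1$ and $u^{(b)}$ with $u^{(b)}_i=0$ such that $\Lambda(u^{(a)})<\Lambda(u^{(b)})$; the canonical choice is $u^{(a)}=e_i$, $u^{(b)}=\mathbf{1}-e_i$, for which $\Lambda(u^{(b)})/\Lambda(u^{(a)})=\prod_{j\neq i}R_j/R_i>1$. Starting from the rule $u_0=u_i$, I would construct a randomized perturbation that drops $u^{(a)}$ from the acceptance region with probability $\gamma$ and adds $u^{(b)}$ with probability $\delta$, choosing $\gamma\,P(u^{(a)}\mid H_0)=\delta\,P(u^{(b)}\mid H_0)$ to preserve the false-alarm probability at $q_i$. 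The resulting change in detection probability, $\delta\,P(u^{(b)}\mid H_1)-\gamma\,P(u^{(a)}\mid H_1)$, has the same sign as $\Lambda(u^{(b)})-\Lambda(u^{(a)})>0$, so the perturbed rule beats $u_0=u_i$ strictly; N--P optimality of the LR fusion then gives $p_0>p_i$. This dispatches the strict subcase of~(1) and all of~(2) uniformly.

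The main obstacle I expect is the bookkeeping for the strict inequality: one has to ensure $\gamma,\delta\in(0,1]$ so the randomized rule is admissible, and that the required pair $(u^{(a)},u^{(b)})$ is genuinely produced by the failure of $R_i\geq\prod_{j\neq i}R_j$ rather than being added by hand. The canonical choice above sidesteps the second issue, and the first reduces to scaling $(\gamma,\delta)$ so that both stay below $1$ while preserving the fixed ratio, which is possible because $P(u^{(a)}\mid H_0),P(u^{(b)}\mid H_0)>0$. A small sanity check on part~(1) confirms that the threshold condition $R_1\geq\prod_{j\geq 2}R_j$ is the exact watershed: at equality, the extremes coincide and $u_0=u_1$ is still an admissible LR test, giving $p_0=p_1$.
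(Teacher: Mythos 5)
The paper offers no proof of this lemma: it is imported verbatim as Theorem~2 of the cited reference~\cite{zhao2007performance}, so there is no in-paper argument to compare against. Your blind proof is, as far as I can check, correct and self-contained. The factorization $\Lambda(u)=\bigl(\prod_j b_j\bigr)\prod_j R_j^{u_j}$ correctly reduces the question of whether $\{u: u_i=1\}$ is a threshold region to a comparison of the extreme patterns $e_i$ and $\mathbf{1}-e_i$ (legitimate because every $R_j>1$ under Assumption~\ref{productive_sensors}), yielding the exact criterion $R_i\geq\prod_{j\neq i}R_j$ for $u_0=u_i$ to be an admissible N--P test of size $q_i$; the dichotomy in part~(1) then follows from the fact that all most powerful tests at a given level share the same power, and part~(2) follows because for $i\geq 2$ and $n\geq 3$ the product $\prod_{j\neq i}R_j$ contains $R_1\geq R_i$ together with at least one further factor exceeding $1$, so the criterion fails strictly. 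The mass-swapping perturbation that trades $e_i$ for $\mathbf{1}-e_i$ at fixed false-alarm probability is the standard certificate of strict suboptimality of a non-LR test, and the identity $\delta P(u^{(b)}\mid H_1)-\gamma P(u^{(a)}\mid H_1)=\gamma P(u^{(a)}\mid H_0)\,(\Lambda(u^{(b)})-\Lambda(u^{(a)}))$ settles the sign, with feasibility of $\gamma,\delta\in(0,1]$ immediate since $q_j\in(0,1)$ gives every pattern positive probability under $H_0$. What your route buys over simply citing the reference is an elementary, fully explicit argument tied to the odds-ratio metric $R_i$ already central to the paper; it also makes transparent why $R_1\geq\prod_{i=2}^nR_i$ is the exact watershed in part~(1), which the paper uses again in Example~\ref{example:ROC_four_detectors} and in the proof of Theorem~\ref{oracle_performance}.
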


For a better understanding of the above two lemmas, we give an example of a sensor network consisting of three sensors.
\begin{example}\label{example:ROC_four_detectors}
Consider the case of three sensors with the following parameters:
	\begin{align*}
		& p_1=0.74,~p_2=0.66,~p_3=0.61, \\
		& q_1=0.16,~q_2=0.32,~q_3=0.39.
	\end{align*}These parameters reflect the detection performance of the three sensors respectively. A FC fuses local decisions from the sensors under a N-P criterion, and its detection and false alarm probabilities can be computed by~\eqref{detection_probability_calculation} and~\eqref{falsealarm_probability_calculation}.
The ROCs of the the four detectors are plotted in Fig.~\ref{ROC_example}. The ROC of the FC consists of eight (i.e., $2^3$) segments with nonincreasing slopes. The two thin solid magenta segments extend the first and last segments. The ROC of sensor $i$'s local detection comprises two (i.e., $2^1$) segments which are split by $(q_i,p_i)$, denoted as a hollow circle. Moreover, $R_1\geq R_2 R_3$, hence by Lemma~\ref{better_performance_2}, 
we have $p_0=p_1$ when $q_0=q_1$. Therefore, the curve of the FC is always above 
that of sensor $1$ except for $(0.16,0.74)$ at which they coincide.
\end{example}

\begin{figure}[htbp]
	\centering
	\includegraphics[width=0.48\textwidth]{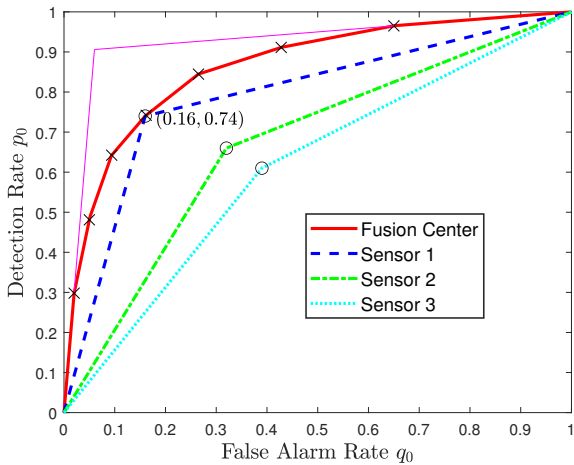}
	\caption{The ROCs of the four detectors in Example~\ref{example:ROC_four_detectors}. }
	\label{ROC_example}
\end{figure}

\section{The Oracle Optimal Algorithm} \label{oracle_optimal_algorithm}
In this section, we will discuss the optimal algorithm that exactly solve~\eqref{N-P_test_original}. We call it the oracle optimal algorithm as its detection probability at every stage is not available analytically and explicitly. We will study its asymptotic performance. 

By~\eqref{LR_test_0_one_stage}, the optimal fusion rules of~\eqref{N-P_test_original} at the $k$-th stage is in terms of the following likelihood ratio (LR) tests:
\begin{equation} \label{LR_test_1}
\Lambda^k: = \frac{\prod_{i=1}^{n} P(u_i^k|H_1)P(u_0^{k-1}|H_1)}{\prod_{i=1}^{n} P(u_i^k|H_0)P(u_0^{k-1}|H_0)}\mathop {\begin{array}{*{20}{c}}
	\ggll
	\end{array}}\limits_{{H_0}}^{{H_1}} (t^k,\lambda^k). 
\end{equation} 
where $t^k$ and $\lambda^k$ are the threshold and the random factor at time $k$.  In particular, when $k=1$, we let $P(u_0^0 \mid H_1)=P(u_0^0 \mid H_0)=0.5$ to make the LR test fit the form of~\eqref{LR_test_1}. To solve~\eqref{LR_test_1}, similar to~\eqref{N-P_test_onestage}, we can resort to solving the following problem: 
\begin{equation} \label{N-P_test}
\begin{aligned} 
\mathop{\rm maximize}_{t^k\in \mathbb R_+,\lambda^k\in [0,1]} \quad & p_0^k(t^k, \lambda^k) \\
{\rm subject~to} \quad & q_0^k(t^k, \lambda^k) \le \alpha.
\end{aligned}
\end{equation}

In virtue of the result in Section~\ref{preliminaries}, we can calculate the optimal pair of $(t^k,\lambda^k)$ as follows.  The likelihood ratio $\Lambda^k$ has $2^{n+1}$ possible values. Let ${\bf \Lambda}^k$ denote the assemble of the possible values of $\Lambda^k$.
Sorting
$\Lambda_1^k,\ldots, \Lambda_{2^{n+1}}^k$ in a non-decreasing order and finding $\Lambda_*^k$ similar to~\eqref{find_optimal_threshold}, then optimal $t^k$ and  $\lambda^k$ are computed as follows:
\begin{equation} \label{oracle_threshold_multistage}
\begin{cases}
t^k=\Lambda_*^k, \\
\lambda^k=\frac{\alpha-\sum_{\Lambda^k \in {\bf \Lambda}^k: \Lambda^k>\Lambda_*^k} P(\Lambda^k|H_0)}{P(\Lambda_*^k|H_0)}.
\end{cases}       
\end{equation}

We will study the asymptotic performance of the oracle optimal algorithm. In the fusion algorithm~\eqref{LR_test_1}, the one-bit memory can encode historical local decisions from the sensors in a low-resolution format and include
it in the iteration of global decision making.  The next result is on the asymptotic behavior of the fusion detection probability, which unveils improved detection performance due to the memory setup for the oracle algorithm. 
\begin{theorem} \label{oracle_performance}
The detection probability of the FC using policy~\eqref{LR_test_1} converges, i.e., 
\begin{equation}\label{asymptotic_detection_probability}
 p_0^{\infty}:=\lim\limits_{k \to \infty} p_0^k 
\end{equation} exists, and for $n \geq 2$ sensors, 
	\begin{equation}\label{oracle_symptotic_performance2}
	p_0^{\infty}=\frac{\alpha \left(\prod_{i=1}^{n}R_i\right)}{1+\alpha \left(\prod_{i=1}^{n}R_i-1\right)},
	\end{equation}
   where $R_i$ is defined in~\eqref{new_performance_metric}.
\end{theorem}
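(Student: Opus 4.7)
I would recast the stage-$k$ fusion rule~\eqref{LR_test_1} as a one-stage $(n+1)$-sensor N-P fusion, where the augmented ``virtual'' sensor represents $u_0^{k-1}$ and carries parameters $(p_0^{k-1},\alpha)$ for $k\ge 2$ (at $k=1$ the virtual is uninformative, so the test collapses to a plain $n$-sensor N-P fusion at level $\alpha$). Since $n+1\ge 3$ whenever $n\ge 2$, the one-stage results of Section~\ref{preliminaries}, and in particular Lemma~\ref{better_performance_2}, apply to this augmented system.

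The argument then has three parts. First, \emph{monotonicity}: because the test level $\alpha$ coincides with the virtual sensor's own false-alarm probability, both branches of Lemma~\ref{better_performance_2} give $p_0^k\ge p_0^{k-1}$, with equality iff $R_0^{k-1}:=\frac{p_0^{k-1}(1-\alpha)}{(1-p_0^{k-1})\alpha}\ge\prod_{i=1}^n R_i$. Hence $\{p_0^k\}$ is non-decreasing and bounded, so $p_0^\infty$ exists. Second, a \emph{uniform upper bound}: denote by $p^\star$ the right-hand side of~\eqref{oracle_symptotic_performance2} and introduce a hypothetical ``oracle virtual'' $V^\star$ with parameters $(q,p)=(\alpha,p^\star)$; then $R_{V^\star}=\prod_i R_i$, so Lemma~\ref{better_performance_2}(1) at its boundary yields fusion detection exactly $p^\star$ when $V^\star$ replaces the actual virtual. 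Whenever $p_0^{k-1}\le p^\star$, the actual virtual $u_0^{k-1}$ can be realized from $V^\star$ by an $H$-independent randomization (a convex mixture of $V^\star$ with an independent $\mathrm{Ber}(\alpha)$ draw), which preserves $q=\alpha$ and brings the detection down to any target in $[\alpha,p^\star]$. The data-processing principle then yields $p_0^k\le p^\star$; induction starting from $p_0^1\le p^\star$ (adding $V^\star$ to the plain $n$-sensor fusion cannot decrease detection) propagates the bound to all $k$.

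Third, \emph{identification of the limit}: the N-P fusion map $P\mapsto f(P)$ obtained by plugging a virtual sensor $(P,\alpha)$ into the procedure~\eqref{oracle_threshold_multistage} is continuous in $P$, since the LR values and the implied threshold/randomization weight depend continuously on $P$, and the resulting detection does not jump across the finitely many sorting transitions. Hence $p_0^\infty=f(p_0^\infty)$, which by Lemma~\ref{better_performance_2} forces $R_0^\infty\ge\prod_i R_i$, i.e.\ $p_0^\infty\ge p^\star$. Combined with the upper bound this yields $p_0^\infty=p^\star$, and solving $\frac{p_0^\infty(1-\alpha)}{(1-p_0^\infty)\alpha}=\prod_i R_i$ recovers~\eqref{oracle_symptotic_performance2}.

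The main obstacle will be the uniform upper bound. Lemma~\ref{better_performance_2} is purely qualitative and does not by itself preclude the iterates from overshooting $p^\star$. The crux is to exhibit the $H$-independent degrading channel that maps $V^\star$ onto the actual virtual while keeping the false-alarm probability pinned to $\alpha$, and to verify that its feasibility region is exactly $[\alpha,p^\star]$; this piece of bookkeeping is where I would devote the most care.
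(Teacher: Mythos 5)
Your proposal is sound and shares the paper's skeleton only in the monotonicity step: both you and the paper view stage $k$ as a one-stage $(n+1)$-sensor N-P fusion with a virtual sensor $(\,q,p)=(\alpha,p_0^{k-1})$ and invoke Lemma~\ref{better_performance_2} to get $p_0^{k-1}\le p_0^{k}$, with equality exactly when $R_0^{k-1}\ge\prod_i R_i$. Where you diverge is in the other two steps, and in both places your route is genuinely different. For the uniform upper bound, the paper works directly on the ROC geometry: it computes the slopes of the leftmost and rightmost segments, finds their intersection $(p_m,q_m)$, and runs a two-case analysis on $\alpha\lessgtr q_m$ to show $R_0^1<\prod_i R_i$, after which Lemma~\ref{better_performance_2} propagates the bound; your degradation argument---realizing the actual virtual sensor from an oracle virtual $V^\star$ with $(\alpha,p^\star)$ via an $H$-independent binary channel pinned to false alarm $\alpha$, then applying data processing---replaces that explicit computation with a structural one, and the feasibility bookkeeping you flag does work out (the channel $a\mapsto b=\alpha(1-a)/(1-\alpha)$ with $a\in[\alpha,1]$ sweeps detection over $[\alpha,p^\star]$, and $p_0^{k-1}\ge\alpha$ always holds for a level-$\alpha$ N-P test). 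For identifying the limit, the paper instead constructs the explicit low-complexity algorithm of Section~\ref{the_proposed_algorithm}, shows via Theorem~\ref{possible_t1} that it attains $p^\star$, and appeals to the oracle's optimality to squeeze; your continuity-plus-fixed-point argument ($p_0^\infty=f(p_0^\infty)$ forces $R_0^\infty\ge\prod_iR_i$ by Lemma~\ref{better_performance_2}) is self-contained, avoids the forward reference to Theorem~\ref{possible_t1}, and sidesteps the slightly delicate step of comparing the asymptote of a greedily-optimal scheme against that of a competitor---at the cost of having to justify continuity of the fusion value map across sorting transitions, which does hold since the level-$\alpha$ detection probability is the value of a linear program depending continuously on the point masses. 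Both proofs are valid; yours is more conceptual and modular, the paper's yields the explicit achieving scheme as a by-product.
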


The proof of Theorem~\ref{oracle_performance} is presented in Appendix~\ref{appendix_oracle_performance} which consists of two parts. In the first part, we show that the sequence $(R_0^k)_{k=1}^{\infty}$ is monotonically increasing and upper bounded by $\prod_{i=1}^{n} R_i$ which implies $R_0^{\infty}:=\lim\limits_{k \rightarrow \infty} R_0^k$ is well-defined, where 
\begin{equation*}
R_0^k=\frac{p_0^k}{1-p_0^k}\frac{1-q_0^k}{q_0^k}.
\end{equation*}In the second part, we construct a fusion algorithm for which $R_0^k$ converges to $\prod_{i=1}^{n} R_i$. Then~\eqref{oracle_symptotic_performance2} follows by squeeze principle.

We note that $R_0^1$ depicts the performance of the FC without memory. Since the sequence $(R_0^k)_{k=1}^{\infty}$ is monotonically increasing, the one-bit memory can indeed improve the detection performance. However, $R_0^1$ is not analytically available, and we cannot quantify the value of the one-bit memory. Nevertheless, from the simulation and real-world experiment results, it can be seen that the improvement is significant. 

\begin{remark} \label{asymptotic_comparison}
In~\cite{kam1999performance}, under the Bayesian formulation, the result unveils that the false alarm and detection probabilities of the FC will eventually enter a certain polygon in the $p-q$ plane. Once the probabilities
have entered the region, they remain unchanged, and no further
observations need to be collected. In our problem setting, it can be read from Theorem~\ref{oracle_performance} that there also exists a ``stopping point'' $p_0^{\infty}$ of detection probability. However, $p_0^k$ cannot exceed $p^{\infty}_0$, but infinitely approach it.   
\end{remark}

Following from on Theorem~\ref{oracle_performance}, the following corollary is straightforward:
\begin{corollary} \label{corollary_convergence}
	The threshold and random factor $(t^k,\lambda^k)$ in~\eqref{LR_test_1} converge, i.e., the following limit values exist:
		\begin{equation}\label{threshold_convergence}
	t^{\infty}:=\lim\limits_{k \to \infty} t^k \hbox{~~~~~and~~~~~~} \lambda^{\infty}:=\lim\limits_{k \to \infty} \lambda^k. 
	\end{equation}
\end{corollary}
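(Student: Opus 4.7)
My plan is to derive the convergence of $(t^k,\lambda^k)$ directly from the convergence of $p_0^k$ established in Theorem~\ref{oracle_performance}, by treating the pair $(t^k,\lambda^k)$ as continuous functions of $p_0^{k-1}$ through the closed-form prescription~\eqref{oracle_threshold_multistage}.

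First, I would note that at every stage the N-P optimum activates the false-alarm constraint, so $q_0^k=\alpha$ for all $k\geq 1$, which is always achievable via the randomized factor $\lambda^k\in[0,1]$. Combining this with Theorem~\ref{oracle_performance}, the conditional distributions of the memory bit $u_0^{k-1}$ under both hypotheses stabilize:
\begin{equation*}
P(u_0^{k-1}=1\mid H_1)=p_0^{k-1}\to p_0^{\infty},\qquad P(u_0^{k-1}=1\mid H_0)=\alpha.
\end{equation*}
Next, I would index the $2^{n+1}$ candidate likelihood-ratio values in ${\bf \Lambda}^k$ by their underlying realizations $\omega=(u_1^k,\ldots,u_n^k,u_0^{k-1})\in\{0,1\}^{n+1}$. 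Each $\Lambda^k(\omega)$ is a continuous function of $p_0^{k-1}$ (only the factor $P(u_0^{k-1}\mid H_1)/P(u_0^{k-1}\mid H_0)$ carries the time dependence), while the matching probability $P(\omega\mid H_0)$ is constant in $k$ for $k\geq 2$. Hence every candidate value $\Lambda^k(\omega)$ converges to a limit $\Lambda^{\infty}(\omega)$. Because sorting is continuous on a finite list, the ordered sequence $\Lambda^k_{(1)}\leq\cdots\leq\Lambda^k_{(2^{n+1})}$ converges componentwise, which forces the tail sums $\sum_{\Lambda>\Lambda^k_{(j)}}P(\Lambda\mid H_0)$ to converge as well. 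Applying~\eqref{find_optimal_threshold} then shows that the smallest qualifying index $j_*^k$ settles for all sufficiently large $k$, delivering $t^k=\Lambda^k_*\to t^{\infty}$ and, through~\eqref{oracle_threshold_multistage}, $\lambda^k\to\lambda^{\infty}$.

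The main technical subtlety is handling ties at the limit, i.e., distinct realizations $\omega_1\neq\omega_2$ with $\Lambda^{\infty}(\omega_1)=\Lambda^{\infty}(\omega_2)$ but $\Lambda^k(\omega_1)\neq\Lambda^k(\omega_2)$ for finite $k$. Inside such a cluster the chosen $\Lambda^k_*$ may rotate among the cluster's elements, yet all of those elements converge to the same limit, so $t^k$ still converges. For $\lambda^k$ I would re-read~\eqref{oracle_threshold_multistage} as the linear identity $\sum_{\Lambda>t^k}P(\Lambda\mid H_0)+\lambda^k P(t^k\mid H_0)=\alpha$ and argue that, as $p_0^{k-1}\to p_0^{\infty}$, both the partial sum and the mass at $t^k$ converge (with the cluster probabilities eventually absorbed into whichever side of the threshold they fall on), pinning down $\lambda^{\infty}$ uniquely.
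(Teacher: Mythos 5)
Your overall route is the one the paper implicitly takes (the paper states the corollary as immediate from Theorem~\ref{oracle_performance} and only later, in Appendix~\ref{appendix_convergence_sped}, supplies the key structural fact): each candidate likelihood ratio factors as $\Lambda^k = L(u_1^k,\ldots,u_n^k)\cdot M^k(u_0^{k-1})$ with $M^k(1)=p_0^{k-1}/\alpha$ and $M^k(0)=(1-p_0^{k-1})/(1-\alpha)$, so every candidate converges and the whole question reduces to whether the sorted order and the selected index stabilize. You correctly flag the tie issue as the crux, but your resolution of it is an assertion rather than an argument, and as stated it does not hold: if two realizations $\omega_1\neq\omega_2$ satisfy $\Lambda^{\infty}(\omega_1)=\Lambda^{\infty}(\omega_2)$ while their finite-$k$ order keeps flipping, then $\lambda^k$ from~\eqref{oracle_threshold_multistage} can oscillate between two distinct values. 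For instance, with tail mass $S$ strictly above the cluster, $P(\omega_1|H_0)=0.3$, $P(\omega_2|H_0)=0.1$ and $\alpha-S=0.32$, the two orderings give $\lambda^k=0.2$ and $\lambda^k\approx 0.733$ respectively, yet both choices yield $q_0^k=\alpha$ and the same limiting $p_0^k$, because the cluster members have equal limiting likelihood ratios and the detection probability is insensitive to how the threshold mass is split among them. So convergence of $p_0^k$ alone does not ``pin down $\lambda^{\infty}$ uniquely''; you must rule out the order oscillation itself, and nothing in your argument does so.

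The missing ingredient is already available in the paper. Part~I of Appendix~\ref{appendix_oracle_performance} shows that $R_0^k$ is monotonically increasing, and since $q_0^k=\alpha$ for all $k$ this means $M^k(0)/M^k(1)=1/R_0^{k-1}$ is monotonically decreasing. Candidates sharing the same memory bit have a $k$-independent relative order, while for candidates with different memory bits the comparison $L(\bm u)M^k(1)$ versus $L(\bm u')M^k(0)$ is equivalent to comparing $L(\bm u)/L(\bm u')$ with $1/R_0^{k-1}$, which by monotonicity can flip at most once. Hence the sorted order, the qualifying index in~\eqref{find_optimal_threshold}, the tail sum above the threshold, and the $H_0$-mass at the threshold are all eventually constant; thereafter $t^k$ is a continuous function of the convergent $p_0^{k-1}$ and $\lambda^k$ is in fact eventually constant. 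This is precisely the stabilization the paper invokes at the start of Part~II of the proof of Theorem~\ref{convergence_sped}, and adding it closes the gap in your proposal.
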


\begin{remark}
	We remark that although the sequence of $(t^k,\lambda^k)_{k=1}^{\infty}$ can be pre-determined, this offline manner needs infinite memory to store these parameters. Hence, we suppose that the parameters are calculated using an online manner. At each stage, $2^{n+1}$ likelihood ratios need to be calculated and sorted to solve the N-P test. This procedure is computationally inefficient, and it cannot be scaled when $n$ goes large.
\end{remark}

\section{The Low-complexity Distributed Detection with One-bit Memory} \label{the_proposed_algorithm}

We see from the last section that the oracle detection algorithm is computationally inefficient in the sense that a test is performed at each time subject to the N-P criterion. In this section, we will turn to investigate an algorithm with low computational complexity.  To this end,  a part of performance needs to be sacrificed. Here it will be the transient one. The reason why we can pay less attention to transient performance is that in the proposed low-complexity algorithm, as we will show, its detection probability converges to a steady state exponentially at the same rate as that of the oracle algorithm (Theorem~\ref{convergence_sped}). Taking a long-term point of view, we may not lose much when we only concentrate on the steady-state detection probability, and alternatively, we consider the following problem:
\begin{equation} \label{proposed_N-P_test_my}
\begin{aligned} 
\mathop{\rm{maximize}}_{\bm{\Gamma}} \quad & \lim \mathop{\rm{inf}}_{k \rightarrow \infty} p_0^k(\bm{\Gamma})\\
{\rm subject~to} \quad & q_0^k(\bm{\Gamma}) \leq \alpha,~k=1,2,\ldots.
\end{aligned}
\end{equation}
Before solving the problem~\eqref{proposed_N-P_test_my}, we will first focus on the following problem with relaxed constraint: 
\begin{equation} \label{proposed_N-P_test2}
\begin{aligned} 
\mathop{\rm{maximize}}_{\bm{\Gamma}} \quad & \lim \mathop{\rm{inf}}_{k \rightarrow \infty} p_0^k(\bm{\Gamma})\\
{\rm subject~to} \quad & \lim \mathop{\rm{inf}}_{k \rightarrow \infty} q_0^k(\bm{\Gamma}) \leq \alpha.
\end{aligned}
\end{equation}
In what follows, we will devise a fusion policy that can be seen as a stationary approximation of the oracle optimal hypothesis test. Then we will reveal that as time goes the two algorithms will eventually approach the same detection performance.  
The oracle and the proposed algorithms will be further compared from the perspectives of computational complexity as well as convergence speed. 

\subsection{Algorithm Development and Analysis}
We propose a stationary fusion algorithm,  i.e., in which $\Gamma^k$ will not vary in different stages, which thereby has a low computational complexity.  In the algorithm, at each time $k$ the following LR test is performed:
\begin{equation} \label{LR_test_5}
	\frac{\prod_{i=1}^{n} P(u_i^k|H_1)}{\prod_{i=1}^{n} P(u_i^k|H_0)}\mathop {\begin{array}{*{20}{c}}
			\ggll
	\end{array}}\limits_{{H_0}}^{{H_1}} \left(t(u_0^{k-1}),\lambda(u_0^{k-1}) \right) , 
\end{equation} 
where  the threshold and random factor are  selected according to the global detection result of the previous time, formally expressed as follows:
\begin{equation} \label{threshold_selection}
\left(t(u),\lambda(u) \right)=
\begin{cases}
(t_0,\lambda_0), &\text{if } u=0, \\
(t_1,\lambda_1), &\text{if } u=1,
\end{cases}       
\end{equation}
where $(t_0,\lambda_0)$ and $(t_1,\lambda_1)$ are to be determined later.  Since  $u_0^{0}$ is not available  initially at time $k=1$,  $\left(t(u_0^0),\lambda(u_0^0) \right)$ can be selected as either $(t_0,\lambda_0)$ or $(t_1,\lambda_1)$.
\begin{remark} \label{relationship_of_algorithms}
The devising of detection rule~\eqref{LR_test_5} is enlightened from the convergence of the oracle test rules. From Theorem~\ref{oracle_performance} and Corollary~\ref{corollary_convergence}, we have the stationary values $p_0^{\infty}$, $t^{\infty}$ and $\lambda^{\infty}$ when the detection probability of the FC converges. At that time~\eqref{LR_test_1} becomes 
\begin{equation*} \label{LR_test_4}
	 \frac{\prod_{i=1}^{n} P(u_i^k|H_1)}{\prod_{i=1}^{n} P(u_i^k|H_0)}\mathop {\begin{array}{*{20}{c}}
		\ggll
		\end{array}}\limits_{{H_0}}^{{H_1}} \left(\frac{P(u_0^{k-1}|H_0)}{P(u_0^{k-1}|H_1)} t^{\infty},\lambda^{\infty} \right),
	\end{equation*}
	where
	\begin{equation*} \label{threshold_selection_2}
	\left(\frac{P(u_0^{k-1}|H_0)}{P(u_0^{k-1}|H_1)} t^{\infty},\lambda^{\infty} \right)=
	\begin{cases}
	(\frac{1-\alpha}{1-p_0^{\infty}} t^{\infty},\lambda^{\infty}), &\text{if } u_0^{k-1}=0, \\
	(\frac{\alpha}{p_0^{\infty}} t^{\infty},\lambda^{\infty}), &\text{if } u_0^{k-1}=1.
	\end{cases}       
	\end{equation*}
From this perspective, the proposed algorithm~\eqref{LR_test_5} and~\eqref{threshold_selection} can be treated as an approximate fusion algorithm of the oracle policy. 
    \end{remark}

We will analyze the proposed algorithm~\eqref{LR_test_5}. The following lemma shows the convergence of the detection and false alarm probabilities of the FC equipped with~\eqref{LR_test_5} and~\eqref{threshold_selection}. It was firstly presented in~\cite{zeng2018novel}, 
which is rephrased to adapt itself to the context of the paper.

Let $p_{0,0}$ and $q_{0,0}$ denote the detection probability and false alarm probability of the FC at time $k=1$ when  $\left(t(u_0^0),\lambda(u_0^0) \right)=(t_0,\lambda_0)$, and $p_{0,1}$ and $q_{0,1}$ denote the detection probability and false alarm probability of the FC at time $k=1$ with $\left(t(u_0^0),\lambda(u_0^0) \right)=(t_1,\lambda_1)$.

\begin{lemma}[Theorem~1~\cite{zeng2018novel}]\label{alalgorithm_convergence}
	Given Assumptions~\ref{unchanged_true_hypothesis}, by using the proposed algorithm~\eqref{LR_test_5} and~\eqref{threshold_selection}, the detection and false alarm probabilities of the FC converge, i.e., 
	\begin{equation}\label{performance_convergence}
	p_0^{\infty}:=\lim\limits_{k \to \infty} p_0^k \hbox{~~and~~} q_0^{\infty}:=\lim\limits_{k \to \infty} q_0^k 
	\end{equation}
	exist, and the limits are
	\begin{equation}\label{convergence_performance}
	p_0^{\infty}= \frac{p_{0,0}}{1-(p_{0,1}-p_{0,0})}
	\end{equation}
and
\begin{equation} \label{convergence_of_falsealarm_probability}
	q_0^{\infty}= \frac{q_{0,0}}{1-(q_{0,1}-q_{0,0})}.  
	\end{equation}
\end{lemma}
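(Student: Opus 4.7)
The plan is to observe that, under the proposed rule~\eqref{LR_test_5}--\eqref{threshold_selection}, the global decision sequence $(u_0^k)$ is, conditional on the true hypothesis, a time-homogeneous two-state Markov chain on $\{0,1\}$. First I would condition on $u_0^{k-1}$: by Assumption~\ref{conditional independence}, the sensor outputs $\bm u^k$ are independent of the past, so the probability that $u_0^k=1$ given $u_0^{k-1}=0$ under $H_1$ equals $p_{0,0}$ (the threshold pair $(t_0,\lambda_0)$ is invoked), while given $u_0^{k-1}=1$ under $H_1$ it equals $p_{0,1}$. Applying the law of total probability yields the affine recursion
\begin{equation*}
p_0^k \;=\; p_{0,0}\bigl(1-p_0^{k-1}\bigr) + p_{0,1}\,p_0^{k-1} \;=\; p_{0,0} + (p_{0,1}-p_{0,0})\,p_0^{k-1}, \qquad k\geq 2.
\end{equation*}

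Second, this first-order linear recursion has the unique fixed point $p_0^{\star}=p_{0,0}/\bigl(1-(p_{0,1}-p_{0,0})\bigr)$, and unfolding it gives
\begin{equation*}
p_0^k - p_0^{\star} \;=\; (p_{0,1}-p_{0,0})^{\,k-1}\,\bigl(p_0^1 - p_0^{\star}\bigr).
\end{equation*}
Hence whenever $|p_{0,1}-p_{0,0}|<1$ we obtain $p_0^k\to p_0^{\star}$ geometrically, which is precisely~\eqref{convergence_performance}. Performing the identical derivation under $H_0$ with $(q_{0,0},q_{0,1})$ in place of $(p_{0,0},p_{0,1})$ produces~\eqref{convergence_of_falsealarm_probability}. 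This argument also makes clear that the arbitrary initialization of $(t(u_0^0),\lambda(u_0^0))$ at $k=1$ affects only the transient term $p_0^1-p_0^{\star}$, not the limit.

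The main, and indeed the only subtle, obstacle is verifying the strict contraction $|p_{0,1}-p_{0,0}|<1$ (and likewise for the $q$-sequence); without it the iterates could oscillate or depend on the initialization in the limit. The weak inequality $|p_{0,1}-p_{0,0}|\leq 1$ is automatic since both quantities lie in $[0,1]$, and equality would force $\{p_{0,0},p_{0,1}\}=\{0,1\}$, meaning one of the LR tests declares $H_1$ almost surely under $H_1$ while the other never does. Under Assumption~\ref{productive_sensors} together with $p_i,q_i\in(0,1)$, every atom of the discrete likelihood-ratio distribution carries strictly positive mass under both hypotheses, so no admissible choice of $(t_j,\lambda_j)$ drives $p_{0,j}$ to $0$ or $1$. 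Once this degenerate case is ruled out, the recursion is a strict contraction, convergence is geometric at rate $|p_{0,1}-p_{0,0}|$, and the closed-form limits in~\eqref{convergence_performance} and~\eqref{convergence_of_falsealarm_probability} follow.
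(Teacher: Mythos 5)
Your proof is correct and follows essentially the same route as the paper's (the paper defers the formal proof to~\cite{zeng2018novel}, but the identical affine recursion $p_0^k=(p_{0,1}-p_{0,0})p_0^{k-1}+p_{0,0}$ appears verbatim in Part~I of Appendix~C and in the closed-form induction for $q_0^k$ in Section~V). One small imprecision: your justification that ``no admissible choice of $(t_j,\lambda_j)$ drives $p_{0,j}$ to $0$ or $1$'' is not literally true --- taking $t_j=\Lambda_{2^n}$ with $\lambda_j=0$ gives $p_{0,j}=0$, and $t_j=\Lambda_1$ with $\lambda_j=1$ gives $p_{0,j}=1$ --- but the conclusion you need (that the degenerate pair $\{p_{0,0},p_{0,1}\}=\{0,1\}$ cannot occur) does hold for the thresholds actually employed, e.g.\ those in~\eqref{threshold_selection_3}, where $q_{0,0}=q'^*_1>0$ and $q_{0,1}<1$ force $p_{0,0},p_{0,1}\in(0,1)$.
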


Combining~\eqref{LR_test_5},~\eqref{threshold_selection} and Lemma~\ref{alalgorithm_convergence} together, we conclude that~\eqref{proposed_N-P_test2} is equivalent to
choosing $(t_0,\lambda_0)$ and $(t_1,\lambda_1)$ so that $p_0^{\infty}$ is maximized subject to 
$q_0^{\infty}\leq \alpha$. It is worth noting that similarly in~\eqref{N-P_test_onestage} $p_0^{\infty}$ reaches its maximum when $q_0^{\infty}=\alpha$. From~\eqref{convergence_of_falsealarm_probability} we have $q_{0,1}=1+q_{0,0}-\frac{q_{0,0}}{\alpha}$ when $q_0^{\infty}=\alpha$. Moreover, $p_{0,0}$ is determined once $q_{0,0}$ is fixed and so do $p_{0,1}$ and $q_{0,1}$. Therefore, to solve~\eqref{proposed_N-P_test2} it relies on solving the following problem:
\begin{equation} \label{proposed_N-P_test}
\begin{aligned} 
\mathop{\rm maximize}_{q_{0,0},q_{0,1} \in [0,1]} \quad & p_0^{\infty}(q_{0,0},q_{0,1}) \\
{\rm subject~to} \quad & q_{0,1}=1+q_{0,0}-\frac{q_{0,0}}{\alpha},
\end{aligned}
\end{equation}
Once $q_{0,0}$ and $q_{0,1}$ are determined, the optimal $(t_0,\lambda_0)$ and $(t_1,\lambda_1)$ can be uniquely obtained. 

 We will explore the optimal $q_{0,0}$ in~\eqref{proposed_N-P_test} by exploiting the segmentation characteristic of the ROC curve of the optimal single-stage N-P decision fusion, see Lemma~\ref{ROC_curve_one_stage}. The following theorem formally presents the conclusion.

\begin{theorem} \label{possible_t1}
	An optimal solution to problem~\eqref{proposed_N-P_test} can be expressed as $(q_{0,0},q_{0,1})=(q,1+q-q/\alpha)$, where $q\in \left( 0,q'^*_1\right] $ with $q'^*_1=\min \left\{\prod_{i=1}^{n} q_i,\frac{\alpha}{1-\alpha} \prod_{i=1}^{n} (1-q_i)\right\}$. There corresponds to an optimal value of 
    $p_0^\infty$, which is
		\begin{equation}\label{oracle_symptotic_performance3}
	p_0^{\infty}=\frac{\alpha \left(\prod_{i=1}^{n}R_i\right)}{1+\alpha \left(\prod_{i=1}^{n}R_i-1\right)}.
	\end{equation}
\end{theorem}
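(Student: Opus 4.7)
The plan is to recast problem~\eqref{proposed_N-P_test} as a bound on the steady-state odds ratio
\[ R_0^{\infty} := \frac{p_0^{\infty}(1-q_0^{\infty})}{(1-p_0^{\infty})q_0^{\infty}}, \]
and then to exploit the segment-wise slope structure of the single-stage N-P ROC provided by Lemma~\ref{ROC_curve_one_stage}. First I would observe that the active false-alarm constraint $q_0^{\infty}=\alpha$ is already encoded in the relation $q_{0,1}=1+q_{0,0}-q_{0,0}/\alpha$, and that $p_0^{\infty}$ is a strictly increasing function of $R_0^{\infty}$ whenever $q_0^{\infty}$ is held fixed, so maximizing $p_0^{\infty}$ reduces to maximizing $R_0^{\infty}$ over the feasible $(q_{0,0},q_{0,1})$. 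Substituting the closed-form limits from Lemma~\ref{alalgorithm_convergence} then produces the clean factorization
\[ R_0^{\infty} = \frac{p_{0,0}}{q_{0,0}}\cdot\frac{1-q_{0,1}}{1-p_{0,1}}, \]
which conveniently separates the two auxiliary single-stage detectors induced by $(t_0,\lambda_0)$ and $(t_1,\lambda_1)$.

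Next I would read each factor geometrically on the single-stage N-P ROC: $p_{0,0}/q_{0,0}$ is the slope of the chord from $(0,0)$ to $(q_{0,0},p_{0,0})$, while $(1-p_{0,1})/(1-q_{0,1})$ is the slope of the chord from $(q_{0,1},p_{0,1})$ to $(1,1)$. By Lemma~\ref{ROC_curve_one_stage} the ROC is concave and piecewise linear with $2^n$ segments of monotonically nonincreasing slope $l_1\ge l_2\ge\cdots\ge l_{2^n}$, so the first chord slope is at most $l_1$ and the second chord slope is at least $l_{2^n}$. Under Assumption~\ref{productive_sensors}, the maximum likelihood ratio is attained by the all-ones outcome and the minimum by the all-zeros outcome, yielding $l_1=\prod_{i=1}^n p_i/q_i$ and $l_{2^n}=\prod_{i=1}^n (1-p_i)/(1-q_i)$. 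Multiplying the two chord inequalities gives $R_0^{\infty}\le\prod_{i=1}^n R_i$.

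Finally I would characterize the equality set and translate it back into the variable $q_{0,0}$. Equality in the first chord bound holds precisely when $(q_{0,0},p_{0,0})$ sits on the leftmost ROC segment, i.e., $q_{0,0}\le\prod_{i=1}^n q_i$; equality in the second requires $(q_{0,1},p_{0,1})$ to lie on the rightmost segment, i.e., $1-q_{0,1}\le\prod_{i=1}^n (1-q_i)$. Eliminating $q_{0,1}$ using $q_{0,1}=1+q_{0,0}-q_{0,0}/\alpha$ rewrites the second condition as $q_{0,0}\le\tfrac{\alpha}{1-\alpha}\prod_{i=1}^n (1-q_i)$, and intersecting the two conditions yields precisely the interval $(0,q'^*_1]$ claimed in the theorem; $q=0$ is excluded because it collapses $p_{0,0}$ and $q_{0,0}$ simultaneously and degenerates the limits in Lemma~\ref{alalgorithm_convergence}. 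Substituting $R_0^{\infty}=\prod R_i$ together with $q_0^{\infty}=\alpha$ back into the definition of $R_0^{\infty}$ then delivers the closed form~\eqref{oracle_symptotic_performance3}. The main delicacy will be pinning down the leftmost and rightmost ROC slopes as the all-ones and all-zeros likelihood ratios and invoking the one-sided chord inequalities from Lemma~\ref{ROC_curve_one_stage} correctly; once that geometry is in place the remainder is algebraic bookkeeping.
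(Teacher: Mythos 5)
Your proof is correct, and it reaches the result by a cleaner route than the paper's. The paper works directly with the expression $p_0^{\infty}=p_{0,0}/\bigl(1-(p_{0,1}-p_{0,0})\bigr)$: it first computes this quantity explicitly when $(q_{0,0},p_{0,0})$ and $(q_{0,1},p_{0,1})$ lie on the leftmost and rightmost ROC segments, and then, for $q_{0,0}>q'^*_1$, bounds $p_{0,0}\le l_1q_{0,0}$ and $p_{0,1}\le 1-l_{2^n}\frac{1-\alpha}{\alpha}q_{0,0}$ and uses monotonicity of the fraction in both arguments (with a somewhat informal ``it can be verified'' step illustrated on the three-sensor example) to conclude strict suboptimality. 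You instead pass to the odds ratio and use the identity $R_0^{\infty}=\frac{p_{0,0}}{q_{0,0}}\cdot\frac{1-q_{0,1}}{1-p_{0,1}}$, which is easily checked from Lemma~\ref{alalgorithm_convergence} since $\frac{p_0^{\infty}}{1-p_0^{\infty}}=\frac{p_{0,0}}{1-p_{0,1}}$ and $\frac{1-q_0^{\infty}}{q_0^{\infty}}=\frac{1-q_{0,1}}{q_{0,0}}$; the two factors are exactly the chord slopes from $(0,0)$ and to $(1,1)$, and concavity of the piecewise-linear ROC (Lemma~\ref{ROC_curve_one_stage}) gives the product bound $R_0^{\infty}\le l_1/l_{2^n}=\prod_{i=1}^nR_i$ together with an exact characterization of the equality set, which translates into $q_{0,0}\in(0,q'^*_1]$ after eliminating $q_{0,1}$. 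Both arguments rest on the same lemma and the same identification of the extreme segments, but your decomposition makes the upper bound and its tightness transparent in one multiplicative step, avoids the paper's appeal to the example figure, and dovetails nicely with the quantity $R_0^k$ already used in the proof of Theorem~\ref{oracle_performance}. All the supporting facts you invoke check out: the constraint in~\eqref{proposed_N-P_test} does force $q_0^{\infty}=\alpha$, $p_0^{\infty}$ is increasing in $R_0^{\infty}$ at fixed $q_0^{\infty}$, the leftmost and rightmost segments correspond to the all-ones and all-zeros outcomes under Assumption~\ref{productive_sensors}, and the endpoint $q=0$ is rightly excluded as degenerate.
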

\begin{proof}
The proof utilizes the segmented property of the ROC curve of one-stage decision fusion mentioned in Section~\ref{preliminaries}. First we will prove that when $q_{0,0} \in \left( 0,q'^*_1\right] $, the equation~\eqref{oracle_symptotic_performance3} holds. When $q_{0,0} \in \left( 0,q'^*_1\right] $, i.e., $(q_{0,0},p_{0,0})$ locates on the leftmost segment of the ROC curve, we obtain $t_0=\Lambda_{2^n}$ and $p_{0,0}=l_1 q_{0,0}$. 
For $t_1$, note that $q_{0,1} =1+q_{0,0}-\frac{q_{0,0}}{\alpha}\geq 1-\prod_{i=1}^{n}(1-q_i)=\sum_{i=2}^{2^n} P(\Lambda_i|H_0)$, hence $(q_{0,1},p_{0,1})$ locates on the rightmost segment and we obtain $t_1=\Lambda_1$ and $p_{0,1}=l_{2^n}q_{0,1}+(1-l_{2^n})=1-l_{2^n}\frac{1-\alpha}{\alpha}q_{0,0}$. Note that $l_1=\frac{\prod_{i=1}^{n}p_i}{\prod_{i=1}^{n}q_i}$ and $l_{2^n}=\frac{\prod_{i=1}^{n}(1-p_i)}{\prod_{i=1}^{n}(1-q_i)}$, we have 
\begin{equation} \label{optimal_case}
p_0^{\infty}=\frac{p_{0,0}}{1-(p_{0,1}-p_{0,0})}=\frac{\alpha \left(\prod_{i=1}^{n}R_i\right)}{1+\alpha \left(\prod_{i=1}^{n}R_i-1\right)}.
\end{equation}

Next, we will prove that $q_{0,0} \in \left( 0,q'^*_1\right] $ is optimal. Since $l_1>\dots>l_{2^n}$, we have 
$p_{0,0} \leq l_1 q_{0,0}$ and $p_{0,1} \leq 1+l_{2^n}(q_{0,1}-1)=1-l_{2^n}\frac{1-\alpha}{\alpha}q_{0,0}$. When $q'^*_1<q_{0,0}$, it can be verified that the \enquote{$=$} cannot simultaneously hold for both inequalities. (Again take the case of three sensors in Fig.~\ref{ROC_example} as example. We can see that the $2$-th segment to the $7$-th segment of the ROC curve are below the two thinner solid magenta segments. When $q'^*_1<q_{0,0}$, at least one of $(q_{0,0},p_{0,0})$ and $(q_{0,1},p_{0,1})$ locates in the interval from the $2$-th segment to the $7$-th segment of the ROC curve). From the first part of the proof we see that when $(q_{0,0},p_{0,0})$ and $(q_{0,1},p_{0,1})$ locate on solid magenta segments respectively, we have~\eqref{optimal_case}. For $(q_{0,0},p_{0,0})$ and $(q_{0,1},p_{0,1})$ under the two solid magenta segments, it can be verified that 
\begin{align*}
p_0^{\infty} =\frac{p_{0,0}}{1-(p_{0,1}-p_{0,0})} &<\frac{l_1q_{0,0}}{1-\left(1-l_{2^n} \frac{1-\alpha}{\alpha}q_{0,0}-l_1 q_{0,0} \right)} \\
&=\frac{\alpha \left(\prod_{i=1}^{n}R_i\right)}{1+\alpha \left(\prod_{i=1}^{n}R_i-1\right)},
\end{align*}
which completes the proof.
\end{proof}

We know form the proof of Theorem~\ref{possible_t1} that in the optimal case, $(q_{0,0},p_{0,0})$ locates on the leftmost segment of the ROC curve and $(q_{0,1},p_{0,1})$ locates on the rightmost segment, hence we have that $q_{0,0}<q_{0,1}$. Such a relation can also be inferred from intuition. If $u_0^{k-1}= 0$ which means $H_0$ is more
	likely the true hypothesis than $H_1$, it is reasonable to increase the threshold to reduce the false alarm probability, i.e.,
	choose $(t_0,\lambda_0)$ with a small false alarm probability, and vice versa. We also know that to obtain the optimal $q_{0,0}$ and $q_{0,1}$ which solves~\eqref{proposed_N-P_test}, we can let 
	\begin{equation} \label{threshold_selection_3}
	\begin{cases}
	(t_0,\lambda_0)=\left(\Lambda_{2^n},\frac{q'^*_1}{\prod_{i=1}^{n} q_i}\right), \\
	(t_1,\lambda_1)=\left(\Lambda_{1},1+\frac{(\alpha-1)q'^*_1}{\alpha\prod_{i=1}^{n} (1-q_i)}\right),
	\end{cases}    
	\end{equation}
	where $q'^*_1=\min \left\{\prod_{i=1}^{n} q_i,\frac{\alpha}{1-\alpha} \prod_{i=1}^{n} (1-q_i)\right\}$.
	It is noteworthy that we can compute the above two pair of parameters before system runs according to the configuration of the sensors, which are know \emph{a priori}. 

Now, we are going to tackle problem~\eqref{proposed_N-P_test_my} by a little more effort. From Lemma~\ref{alalgorithm_convergence} we know that the steady-state performances do not depend on the initial threshold that we choose from the two alternatives. However, the initial threshold can affect the transient performances. If we set $(t^1,\lambda^1)$ as $(t_0,\lambda_0)$ in~\eqref{threshold_selection_3}, we can obtain the following result by induction~\cite{zeng2018novel}: 
\begin{align*}
q_0^k &=q_{0,0} \frac{1-(q_{0,1}-q_{0,0})^k}{1-(q_{0,1}-q_{0,0})}\\
&<q_0^{\infty}=\alpha,
\end{align*}
which implies that the transient false alarm probabilities can always be smaller than the desired probability $\alpha$. This will be also illustrated in simulations, see Fig.~\ref{convergence_of_falsealarm}. Hence, the fusion algorithm~\eqref{LR_test_5} and~\eqref{threshold_selection} with initial threshold $(t_0,\lambda_0)$ in~\eqref{threshold_selection_3} optimally solve the problem~\eqref{proposed_N-P_test_my}.

 \subsection{Comparison with Oracle Optimal Algorithm: Computational Complexity and Convergence Rate}   
 
By comparing~\eqref{oracle_symptotic_performance2} and~\eqref{oracle_symptotic_performance3}, we know that both algorithms have the same asymptotic detection probability. Now we consider their computational complexity and convergence speed. We compare the computational complexity by considering their time complexity at a single stage. For the oracle optimal algorithm, at each stage, $2^{n+1}$ possible likelihood ratios should be sorted in an increasing order, for which the computation complexity is $O(2^{2n+2})$ in the worst case by means of some selection sort algorithms. Moreover, appropriate threshold and random factor need to be selected according to to~\eqref{oracle_threshold_multistage}, requiring $O(2^{n+1})$ comparisons in the worst case. Additionally, $n+1$ multiplications are required to calculate the likelihood ratio given $n$ local messages and one-bit memory. Overall, the computational complexity of the oracle optimal algorithm is $O(4^n)$, which is less efficient, especially when $n$ is large. In contrast, for threshold generation in our proposed algorithm, at each stage, the FC only needs to adjust its threshold and random factor from a set of two candidates according to its global detection decision in the last stage, which is stored in its memory. There corresponds to a constant time complexity $O(1)$ despite the size $n$ of the sensor network. By adding the computation of likelihood ratio, the computational complexity of the proposed algorithm is $O(n)$, which shows great superiority to the oracle one.

We see that both algorithms will converge to the same steady-state detection probability. In terms of convergence rate, in the following theorem, we show that their convergence is exponential subject to an asymptotically identical rate.
\begin{theorem} \label{convergence_sped}
	For both algorithms~\eqref{LR_test_1} and~\eqref{LR_test_5}, we have  
	\begin{equation}
	\lim\limits_{k \rightarrow \infty} \frac{\left| p_0^{k+1}-p_0^{\infty}\right| }{\left| p_0^{k}-p_0^{\infty}\right| }=a \in (0,1),
	\end{equation}
	where 
	\begin{equation*}
	a=
	\begin{cases}
	1-\left(1+Q_3 \right) \prod_{i=1}^{n} p_i,~\text{if} ~1<Q_4, \\
		1-\left(1+\frac{1}{Q_3} \right) \prod_{i=1}^{n} (1-p_i),~\text{if} ~1 \geq Q_4.
	\end{cases}
	\end{equation*}
		where $Q_3=\frac{1-\alpha}{\alpha \prod_{i=1}^{n} R_i}$, and $Q_4=\frac{\alpha \prod_{i=1}^{n} (1-q_i)}{(1-\alpha)\prod_{i=1}^{n} q_i}$.
\end{theorem}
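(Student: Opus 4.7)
The plan is to handle the two algorithms separately and exploit that in both cases the recursion governing $p_0^k$ is, at least eventually, affine in $p_0^{k-1}$ with the same slope.

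For the low-complexity algorithm, because the thresholds $(t_u,\lambda_u)$ do not depend on $k$, conditioning on $u_0^{k-1}$ under $H_1$ yields the exact affine recursion $p_0^k = p_{0,0} + (p_{0,1}-p_{0,0})p_0^{k-1}$. Subtracting the steady-state identity for $p_0^\infty$ from Lemma~\ref{alalgorithm_convergence} gives
\[
p_0^k-p_0^\infty = (p_{0,1}-p_{0,0})(p_0^{k-1}-p_0^\infty),
\]
so the ratio in the theorem is \emph{identically} $a := p_{0,1}-p_{0,0}$ for every $k$. Substituting the optimal parameters from~\eqref{threshold_selection_3}, using the ROC-segment expressions $p_{0,0} = l_1 q_{0,0}$ and $p_{0,1} = 1 - l_{2^n}(1-\alpha)q_{0,0}/\alpha$ from the proof of Theorem~\ref{possible_t1}, and splitting on the sign of $1-Q_4$ (which selects which of the two terms in $q'^{*}_1$ is binding) yields the two explicit expressions for $a$ stated in the theorem. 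The identity $Q_3\prod p_i = \prod(1-p_i)/Q_4$, which follows directly from the definitions of $Q_3,Q_4,R_i$ (since $Q_3 Q_4 = \prod(1-p_i)/\prod p_i$), reconciles the two cases.

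For the oracle algorithm, the N-P constraint is met with equality at every stage, so $q_0^k=\alpha$ for all $k$ and the only state carried across stages is $p_0^{k-1}$. Writing the one-stage problem in terms of the conditional probabilities $(q_{0,0}^k,p_{0,0}^k)$ and $(q_{0,1}^k,p_{0,1}^k)$, the constraint collapses to $q_{0,1}^k = 1 - q_{0,0}^k(1-\alpha)/\alpha$, and the objective $p_0^k = p_{0,0}^k(1-p_0^{k-1}) + p_{0,1}^k p_0^{k-1}$ becomes a single-variable piecewise-linear function of $q_{0,0}^k$ with kinks exactly at the ROC vertices from Lemma~\ref{ROC_curve_one_stage}. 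Thus $p_0^k = J(p_0^{k-1})$ for a deterministic piecewise-linear map $J$. At $p_0^{k-1}=p_0^\infty$ the maximizer is the same vertex $q'^{*}_1$ identified in Theorem~\ref{possible_t1}, with $(q_{0,0}^k,p_{0,0}^k)$ on the leftmost segment (slope $l_1$) and $(q_{0,1}^k,p_{0,1}^k)$ on the rightmost segment (slope $l_{2^n}$). Computing the one-sided directional derivatives at this vertex, the derivative along the extreme segments equals $l_1(1-p) - l_{2^n}(1-\alpha)p/\alpha$ and is non-negative on $[0,p_0^\infty]$, while the derivative along the adjacent interior segment replaces $l_1$ or $l_{2^n}$ by the next ROC slope, which by Lemma~\ref{ROC_curve_one_stage} is strictly smaller or larger, making this derivative strictly negative on a whole left-neighborhood $[p^{*},p_0^\infty]$ with $p^{*}<p_0^\infty$. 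Hence on this interval the oracle remains pinned at $q'^{*}_1$, $J$ coincides with the affine map of the previous paragraph, and $J'(p_0^\infty)=a$. Because the oracle satisfies $p_0^k\uparrow p_0^\infty$ (as is shown for $R_0^k$ in the proof of Theorem~\ref{oracle_performance}, and $q_0^k=\alpha$ is constant), for $k$ large enough $p_0^{k-1}\in[p^{*},p_0^\infty]$, so from that point onward the ratio in the theorem equals $a$ exactly.

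The most delicate step is the pinning argument of the previous paragraph, because the binding vertex sits on different sides of the ROC depending on $Q_4$: in the case $1<Q_4$ it is the right end of the leftmost segment, whereas in the case $1\ge Q_4$ it is the left end of the rightmost segment, so each case requires its own one-sided derivative calculation; both still go through because the inequalities $l_1>l_2$ and $l_{2^n-1}>l_{2^n}$ in Lemma~\ref{ROC_curve_one_stage} are strict. Finally, $a\in(0,1)$ follows from $p_{0,1}>p_{0,0}$ (strict monotonicity of the ROC along the chosen segments) together with $p_{0,0}>0$ in either case, so the limiting ratio is genuinely a strict contraction.
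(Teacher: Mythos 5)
Your proposal is correct and reproduces the paper's two-part architecture: the exact affine recursion $p_0^k=(p_{0,1}-p_{0,0})p_0^{k-1}+p_{0,0}$ for the stationary algorithm~\eqref{LR_test_5} (so the ratio is identically $a=p_{0,1}-p_{0,0}$), plus the claim that the oracle recursion~\eqref{LR_test_1} becomes affine with the same slope for large $k$, with the same case split on $Q_4$ and the same matching to the vertex $q'^*_1$ of~\eqref{threshold_selection_3}. The one place you genuinely diverge is in \emph{how} the oracle's eventual stationarity is established: the paper works with the sorted $2^{n+1}$ joint likelihood ratios, shows the ordering stabilizes (using $R_0^k<\prod_i R_i$), and reads off $t^k\in\{\Lambda^k_{2^n},\Lambda^k_{2^n+1}\}$ and $\lambda^k$ from the N-P recipe; you instead recast the per-stage oracle problem as a two-point optimization over the memoryless ROC with the linear constraint $q_{0,1}^k=1-\tfrac{1-\alpha}{\alpha}q_{0,0}^k$ and pin the maximizer at $q'^*_1$ by one-sided derivatives, using the strict slope gaps $l_1>l_2$ and $l_{2^n-1}>l_{2^n}$ of Lemma~\ref{ROC_curve_one_stage} and the monotone approach $p_0^k\uparrow p_0^\infty$ from Theorem~\ref{oracle_performance}. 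Your pinning argument makes explicit a step the paper treats somewhat tersely (why the oracle's threshold position is \emph{fixed} for all large $k$, not merely convergent); to be fully airtight you should add one sentence noting that the nonincreasing ROC slopes make the per-stage objective concave in $q_{0,0}^k$, so the local optimality at $q'^*_1$ you verify is in fact global. The two derivations are otherwise equivalent, since the vertex $q'^*_1$ corresponds exactly to the threshold sitting at $\Lambda^k_{2^n}$ (Case $1<Q_4$) or $\Lambda^k_{2^n+1}$ (Case $1\ge Q_4$) in the paper's ordering.
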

The proof of Theorem~\ref{convergence_sped} is presented in Appendix~\ref{appendix_convergence_sped}.

\section{Numerical Simulations and Real-world Experiments} \label{simulation_experiment_results}
\subsection{Numerical Simulations}
We simply consider the detection fusion of $n$ homogeneous sensors in the sense that they have identical observation model and local decision rule. In the text below, the index of the sensors will be dropped from notations when it does not cause ambiguity. The observation model of every sensor is as follows~\cite{zhu2010fusion}:
\begin{equation} \label{observation_model}
Y^k=
\begin{cases}
w^k, &H_0,\\
A+w^k, &H_1,
\end{cases}       
\end{equation}
where $Y^k$ is the sensor measurement at the $k$-th stage, $A$ is a constant, and $w^k$ is a zero-mean Gaussian noise with variance $\sigma^2$. The local decision rule is set as the threshold rule:
\begin{equation} \label{decison _rule_of_sensors}
u^k=
\begin{cases}
1, &\text{if } y^k \geq y^*,\\
0, &\text{if } y^k <y^*,
\end{cases}       
\end{equation} 
where $y^k$ is the realization of $Y^k$ at the $k$-th stage and $y^*$ is the decision threshold. Since the outputs of local sensors are binary, $p$ ($p_i$) and $q$ ($q_i$) in~\eqref{local_performance} can be viewed as local detection and false alarm probabilities. From~\eqref{observation_model} and~\eqref{decison _rule_of_sensors}, we obtain:
\begin{equation} \label{calculation_detection}
p=\int_{y^*}^{\infty} P(y^k|H_1) d y^k  \hbox {~~and~~}
q=\int_{y^*}^{\infty} P(y^k|H_0) d y^k.
\end{equation}
Note that we do not focus on the decision rules at local sensors. With no loss of generality, we set $A=2$ and $y^*=1$ throughout our simulations. We let $\alpha=q$ to conveniently compare the detection performance between the sensors and the FC.

In the first trial, we set $\sigma ^2=5$ and the signal to noise ratio\footnote{SNR is calculated as SNR=$10\log(A/\sigma^2)$.} (SNR) at local sensors is $-4$dB.
Based on~\eqref{calculation_detection} we obtain $p=0.67$ and $q=0.33$. 
We perform different fusion mechanisms and compare their performance in terms of detection probability. We consider the oracle N-P test~\eqref{LR_test_1} and the proposed low-complexity algorithm~\eqref{LR_test_5}, which can be categorized as distributed detection using a one-bit memory.   Another fusion detector we consider is a kind of instant fusion detection, where the FC only fuses the latest local decisions from the sensors for global decision making. When this detector runs, it implements~\eqref{LR_test_0_one_stage} at every time stage. All the considered fusion detectors have the same asymptotic false alarm probability as the sensor does. How the (steady-state) detection probability of these detectors changes with respect to different numbers of sensors is demonstrated in Fig.~\ref{detection_probability_sensor_number}.  The asymptotic detection probabilities of the oracle test and the proposed algorithm are calculated based on~\eqref{oracle_symptotic_performance2} and~\eqref{oracle_symptotic_performance3}. The detection probabilities of the FC without memory are calculated based on~\eqref{detection_probability_calculation} where the threshold and randomization factor are obtained by~\eqref{oracle_threshold}. We see from the figure that the detection probabilities of the fusion detectors~\eqref{LR_test_1},~\eqref{LR_test_5} are identical and that as sensor number increases, the detection probabilities of all fusion detectors approach 1. The detectors~\eqref{LR_test_1},~\eqref{LR_test_5}  perform much better than~\eqref{LR_test_0_one_stage}  when sensors are few. It is noteworthy that the detectors~\eqref{LR_test_1},~\eqref{LR_test_5} have the same detection probability as a single sensor when $n=1$, while so the detector~\eqref{LR_test_0_one_stage} does when $n \in\{1,2\}$. This happens because the detection probability of the FC (in the canonical N-P setup) can exceed that of each sensor (under the constraint that they have the same false alarm probabilities) only when $n\geq 3$, see~\cite{xiang2001performance} and the one-bit memory storage can be viewed as an additional node, hence for $n=2$ we actually fuse three local decisions at each stage.

\begin{figure}[htbp]
	\centering
	\includegraphics[width=0.48\textwidth]{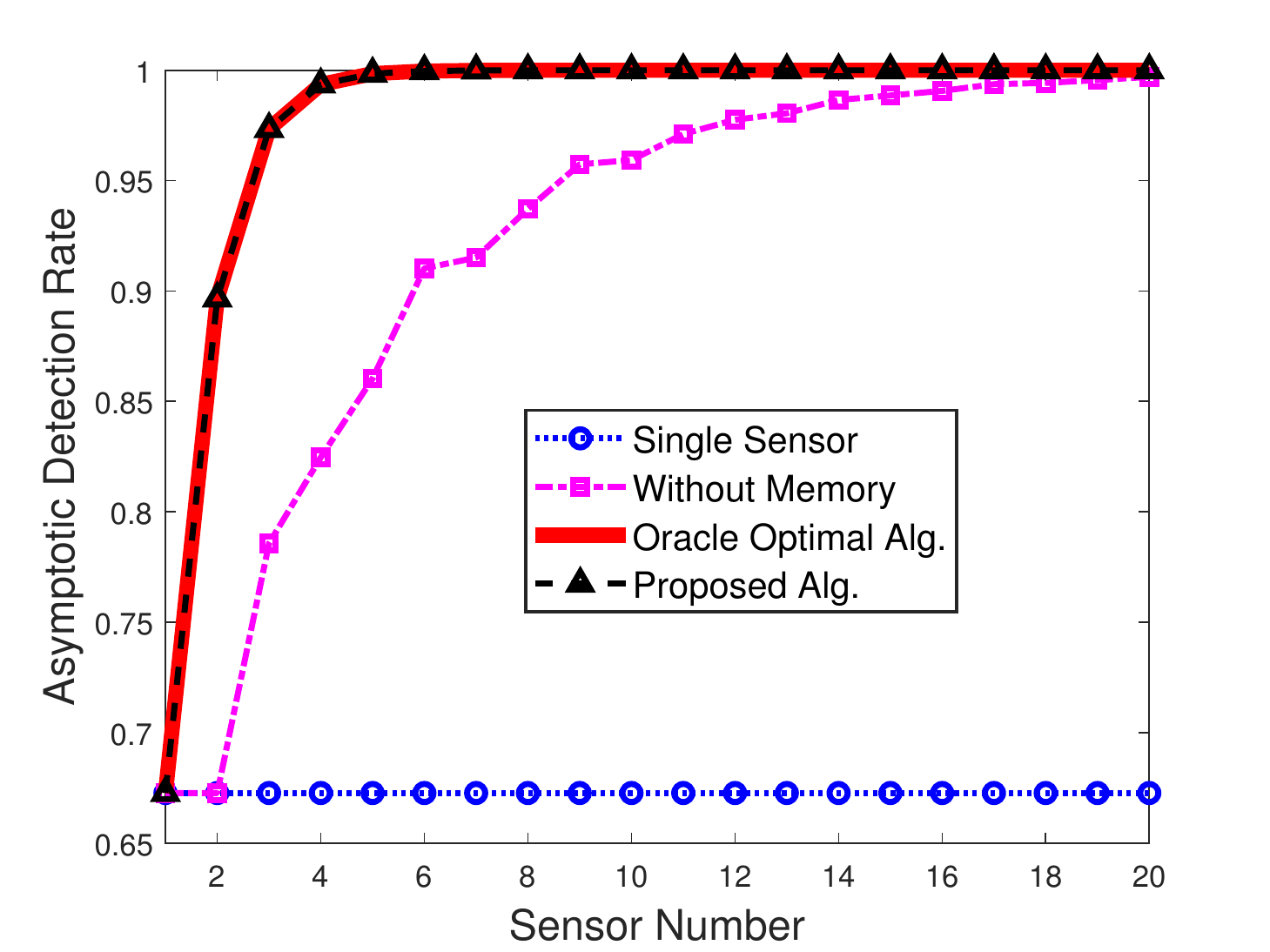}
	\caption{The relation between the (steady-state) detection probabilities and the number of sensors for different detection algorithms in simulations.}
	\label{detection_probability_sensor_number}
\end{figure}

Next, we set $n=4$ and let the SNR at local sensors be at  $10$ different levels, ranging from$-10$dB to $8$dB, respectively, to illustrate the relation between the (steady-state) detection probabilities of various detectors and the SNRs, see Fig.~\ref{detection_probability_snr}. We see from the figure that the fusion detectors~\eqref{LR_test_1},~\eqref{LR_test_5} have greatly improved detection performance, especially when the SNR is low. Moreover, the steady-state detection probability of our proposed algorithm resembles that of the oracle one, validating our theoretical claims. 

\begin{figure}[htbp]
	\centering
	\includegraphics[width=0.48\textwidth]{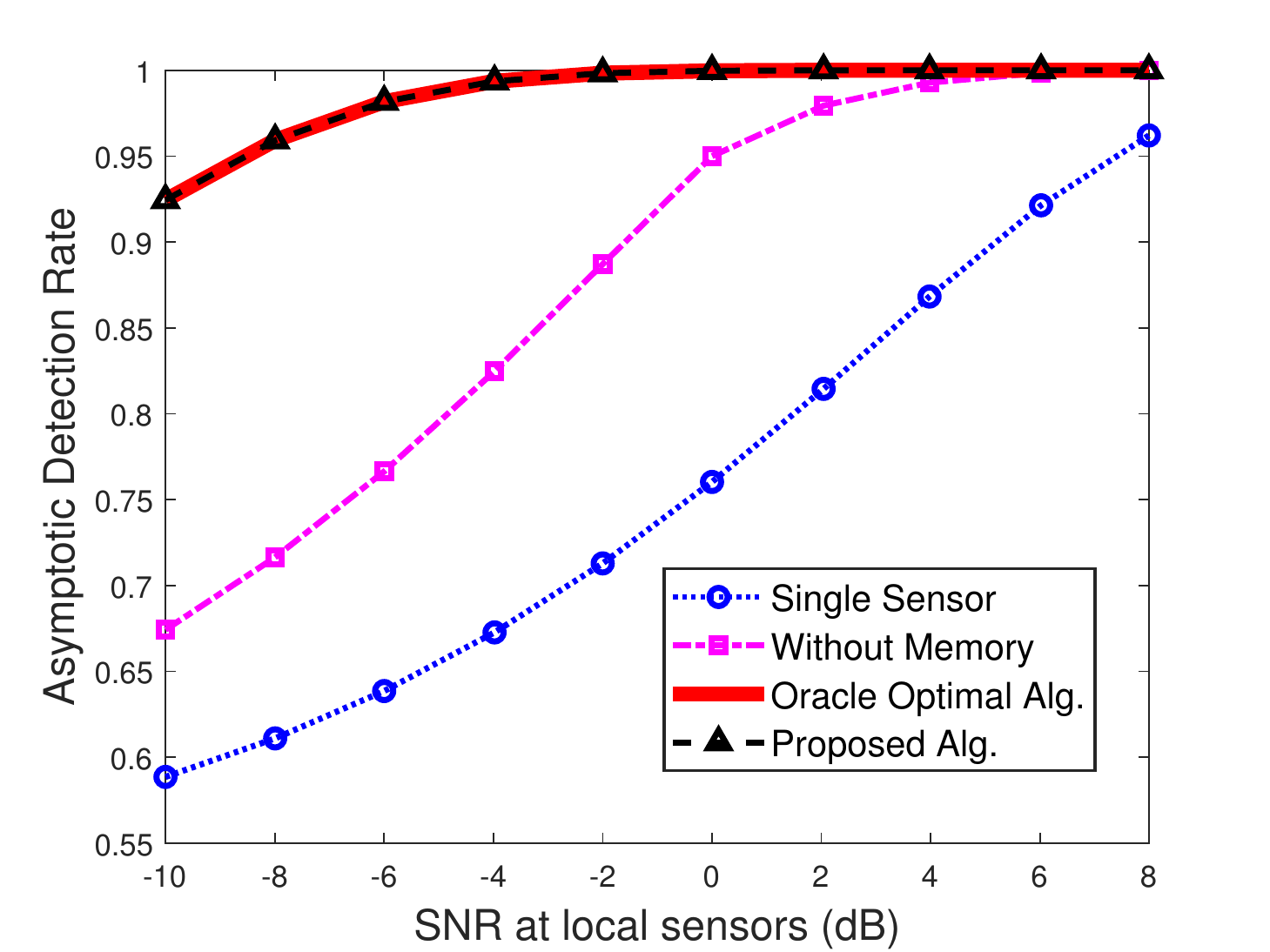}
	\caption{The relation between the (steady-state) detection probabilities and SNR of the local (homogeneous) sensor for different detection algorithms in simulation.}
	\label{detection_probability_snr}
\end{figure}

Let SNR be -8dB, by~\eqref{calculation_detection} we obtain $p=0.61$ and $q=0.39$. The false alarm and the detection probabilities over time are respectively plotted in Fig.~\ref{convergence_exhibition} for different detectors. It is notable that for each plot, our proposed algorithm is simulated twice for two different initial thresholds. We see that they will converge to the same value exponentially despite different threshold initialization. For the oracle optimal algorithm, its detection probability seems to converge faster when $k$ is small. In addition, it maintains a constant false alarm probability over time while our algorithm does not, as Fig.~\ref{convergence_of_falsealarm} exhibits. 
Nevertheless, if we set the initial threshold in~\eqref{LR_test_5} as $(t^1,\lambda^1)=(t_0,\lambda_0)$, where $(t_0,\lambda_0)$ can be calculated from~\eqref{threshold_selection_3}, we can guarantee that the resulting false alarm probability is smaller than $\alpha$ exactly at each time. We also note from Fig.~\ref{convergence_exhibition} that the proposed algorithm needs dozens of time stages to achieve the stationary state. Actually, if the time interval between two successive decisions is in seconds, which is often the case in target detection systems~\cite{shi2018anti}, then tens of seconds is enough for convergence. Hence the assumption of the true hypothesis remaining unchanged is realistic to some extent. 

\begin{figure*}[htbp]
	\centering
	\begin{subfigure}{0.96\columnwidth}
		\includegraphics[width=\columnwidth]{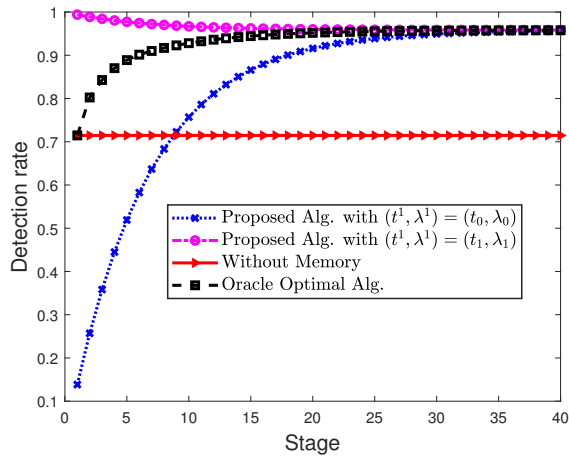}%
		\caption{Convergence of detection probability}
		\label{convergence_of_detection}
	\end{subfigure}
	\hspace{0.02\textwidth}
	\begin{subfigure}{.96\columnwidth}
		\includegraphics[width=\columnwidth]{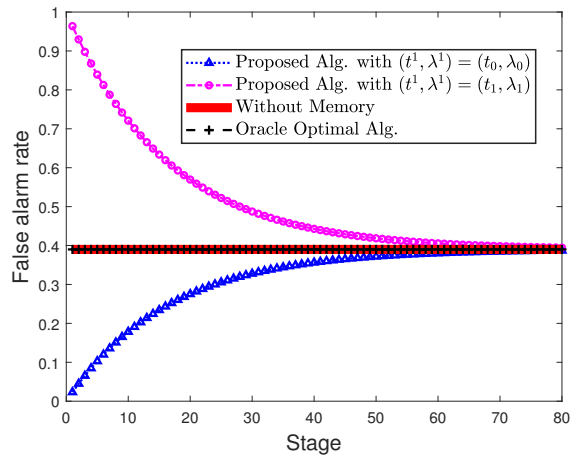}%
		\caption{Convergence of false alarm probability}
		\label{convergence_of_falsealarm}
	\end{subfigure}\hfill%
	\caption{Convergence of the detection performance probabilities. For each plot, our proposed algorithm is simulated twice: one that corresponds to the blue line is the case where the initial threshold is set to be $(t_0,\lambda_0)$,  the other that corresponds to the magenta line is the case where the initial threshold is set to be $(t_1,\lambda_1)$.}
	\label{convergence_exhibition}
\end{figure*}

It should be noted that till now the optimal threshold $q_{0,0} \in (0,q'^*_1]$ is adopted for our proposed algorithm. Based on Theorem~\ref{possible_t1}, it optimally solves problem~\eqref{proposed_N-P_test} and has an identical asymptotic detection probability with that of the oracle optimal algorithm.
Next, we will plot $p_0^{\infty}$ against $q_{0,0}$. For the case of four homogeneous sensors with $p=0.61$ and $q=0.39$, we have $q'^*_1=(q)^4$ and the simulation result is shown in Fig.~\ref{optimal_t1}. The solid segment denotes the case of $0<q_{0,0} \leq (q)^4$, while the dashed segments represent the case of $(q)^4 <q_{0,0} \leq q$. We can see from the figure that the asymptotic detection probability $p_0^{\infty}$ remains unchanged when $0<q_{0,0} \leq (q)^4$, and it decreases with the increase of $q_{0,0}$ when $(q)^4<q_{0,0} \leq q$, which coincides with Theorem~\ref{possible_t1}. We notice that the curve is also segmented which is due to piecewiseness of the ROC curve. However, the number of pieces may vary in different situations. 

\begin{figure}[htbp]
	\centering
	\includegraphics[width=0.46\textwidth]{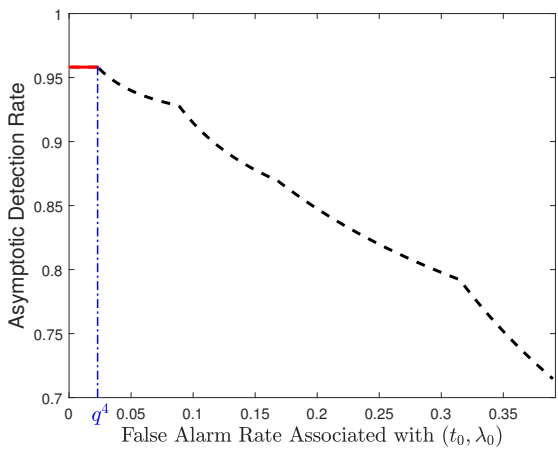}
	\caption{The relation between the (steady-state) detection probability of the proposed algorithm and the false alarm probability $q_{0,0}$ associated with $(t_0,\lambda_0)$. The solid curve denotes the case of $0<q_{0,0} \leq (q)^4$ and the dashed curve denotes the case of $(q)^4<q_{0,0} \leq q$.}
	\label{optimal_t1}
\end{figure}

\subsection{Real-world Experiments}
In this part, we present the experimental results of acoustic source detection using a microphone array. The deployment of the microphone array is shown in Fig.~\ref{microphone_array}, and the device parameters are summarized in TABLE~\ref{microphone_characteristics}. The acoustic source is a DJI Phantom 2 drone, see Fig.~\ref{drone_type}.

\begin{table}[htbp]
	\centering
	\captionsetup{justification=centering}
	\caption{Characteristics of the microphones} \label{microphone_characteristics}
		\resizebox{\columnwidth}{!}{%
	\begin{tabular}{|c |c|}
		\hline
		Model & PCB 130A24\\ [0.5ex]
		\hline
		Frequency Response ($\pm$3 dB)& 20-16000 Hz\\
		\hline
		Sound Field & Free-Field\\
		\hline
		Sensitivity (@ 250 Hz) & 10 mV/Pa\\
		\hline
		Inherent Noise (A Weighted)& $<$30 dB(A) re 20$\mu$Pa\\
		\hline
		Dynamic Range (3$\%$ Distortion Limit) & $>$143 dB re 20$\mu$Pa\\
		\hline
	\end{tabular}
}
\end{table}

\begin{figure}[htbp]
	\captionsetup{singlelinecheck = false, justification=justified}
	\captionsetup[subfigure]{justification=centering}
	\begin{subfigure}[t]{0.57\columnwidth}
		\includegraphics[width=\linewidth]{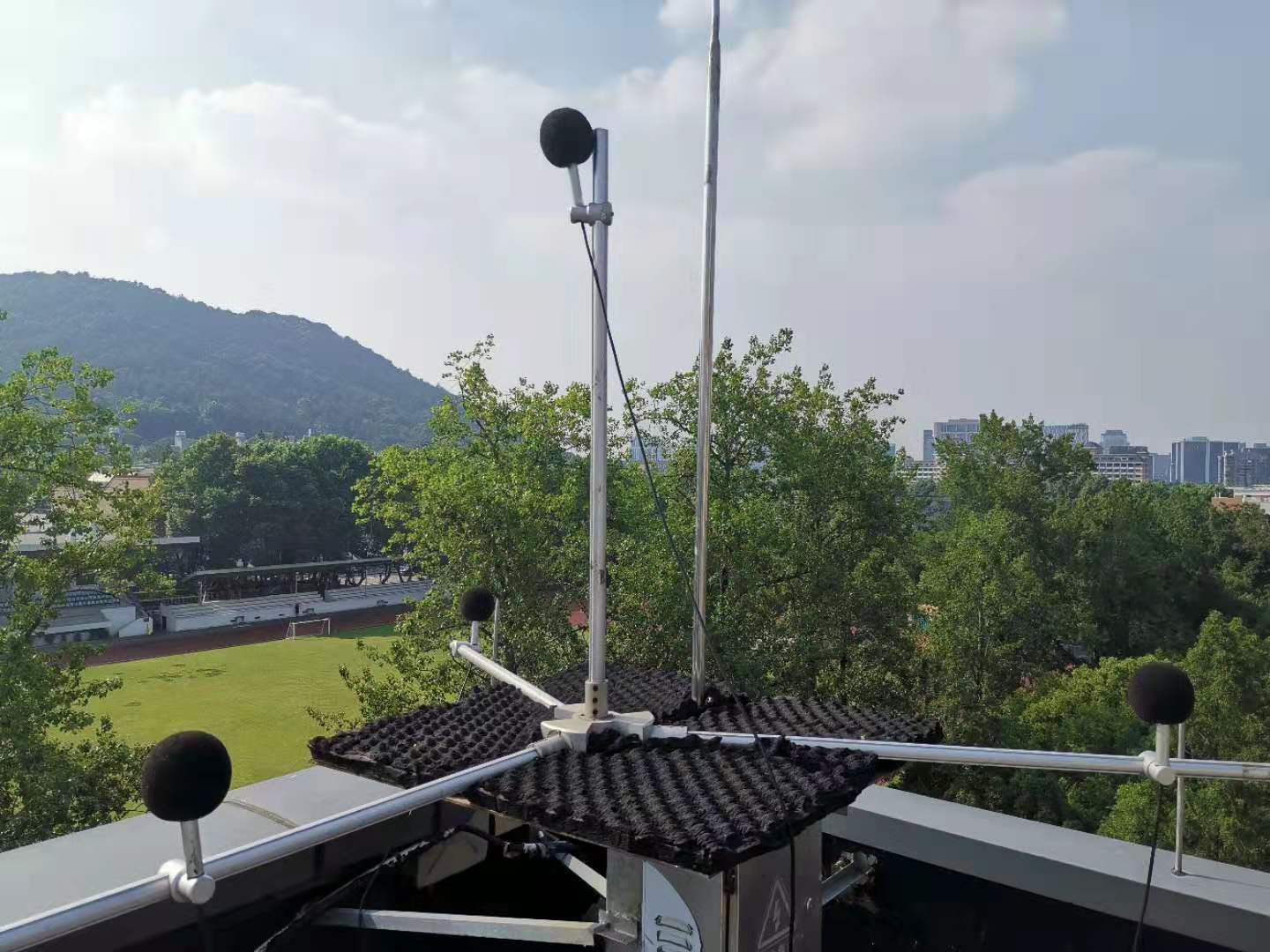}
		\caption{Microphone array}
		\label{microphone_array}
	\end{subfigure}
	\hfill 
	\begin{subfigure}[t]{0.4\columnwidth}
		\includegraphics[width=\linewidth]{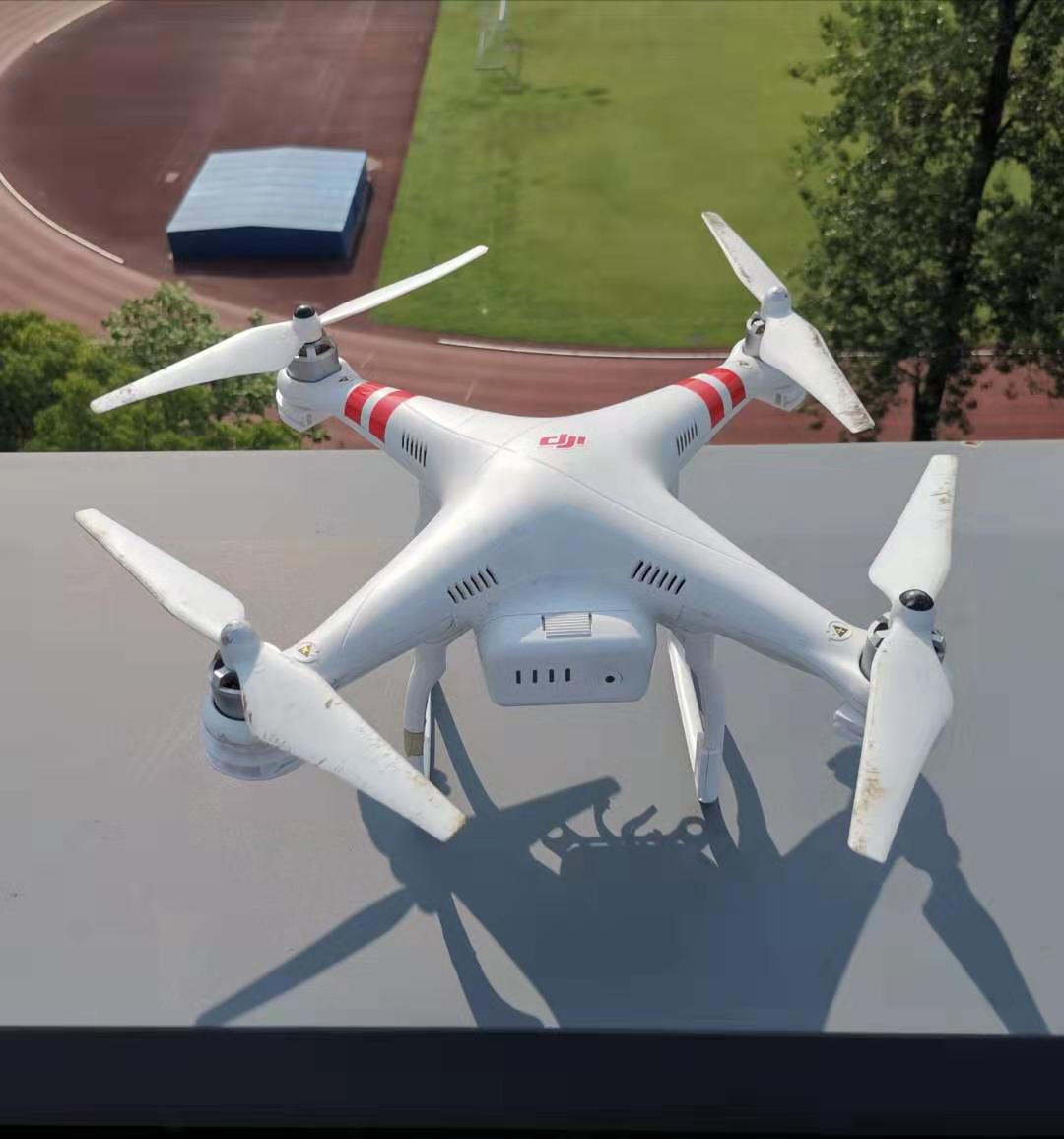}
		\caption{DJI Phantom 2}
		\label{drone_type}
	\end{subfigure}
	\caption{Real-world experiment of acoustic source  detection. The left figure shows the microphone array, where four microphones are deployed at the vertices of a tetrahedron. The right is a figure of a DJI Phantom 2 drone.}
	\label{experiment_equipment}
\end{figure}
We use the NI DAQ device to collect the audio signal, with a sampling frequency set at 25600 Hz. Each microphone detects the drone locally following the procedure below. First, a 36-dimension Mel-frequency cepstral coefficients (MFCC) feature vector is extracted every half a second from the DAQ outputs. Then the MFCC vector is fed into a regression model (a modified version of SVM for regression), which is trained in advance using MATLAB function {\it{fitrsvm}}, to produce a continuous confidence level. Finally, the confidence level is binarized using the threshold rule, by which the local decision is made. In the first step, we collect two sets of signals, one being the drone signals (the drone is flying randomly in a range of 5m to 40m) and the other being background noises. We select an appropriate threshold for each microphone to ensure the four microphones have the same local false alarm probability. The detection and false alarm probabilities of the microphones are listed in TABLE~\ref{train_local_performance}. We see that their detection probabilities vary widely.

\begin{table}[htbp]
	\centering
	\captionsetup{justification=centering}
	\caption{Performances probabilities of the microphones in the first step} \label{train_local_performance}
	\resizebox{0.7\columnwidth}{!}{%
	\begin{tabular}{||c |c c c c||}
		\hline
		 & Mic.1 & Mic.2 & Mic.3 & Mic.4\\ [0.5ex]
		\hline
		p & 0.60 & 0.89 & 0.33 & 0.90\\
		\hline
		q & 0.09 & 0.09 & 0.09 & 0.09\\
		\hline
	\end{tabular}
}
\end{table}

In the second step, we also collect two sets of similar signals and repeat the binarization procedures using the same thresholds in the first step.  Recall that the conditional independence among sensors is required in the analysis, while the microphones are so close to each other that their environmental noises are supposed to be highly correlated. To alleviate the dependence, we randomly disorder each microphone's decision series. The test results are given in TABLE~\ref{test_performance_comparison}.  For the fusion algorithms~\eqref{LR_test_0_one_stage},~\eqref{LR_test_5}, the decision parameters are obtained based on the performance of each microphone obtained in the first step but not the second step. We can see that both fusion algorithms~\eqref{LR_test_0_one_stage},~\eqref{LR_test_5} demonstrate detection performance improvement to some extent, despite inaccuracy rooted in the detection and false alarm probabilities of the individual sensors. Moreover, compared with the decision fusion~\eqref{LR_test_0_one_stage}, our proposed algorithm achieves greater improvement.

\begin{table}[htbp]
	\centering
	\captionsetup{justification=centering}
	\caption{Performance comparison in the second step} \label{test_performance_comparison}
		\resizebox{\columnwidth}{!}{%
\begin{tabular}{||M{1mm} |M{4mm} M{4mm} M{4mm} M{4mm} M{11mm} M{11mm} M{11mm}||}
		\hline
		& Mic.1 & Mic.2 & Mic.3 & Mic.4 & Without memory & Proposed ($t^1=t_0$) & Proposed ($t^1=t_1$)\\ [0.5ex]
		\hline
		p & 0.47 & 0.79 & 0.19 & 0.79 & 0.87 & 0.93 & 0.95\\
		\hline
		q & 0.05 & 0.16 & 0.15 & 0.16 & 0.16 & 0.003 & 0.05\\
		\hline
	\end{tabular}
}
\end{table}

\section{Conclusion and discussion} \label{conclusion}
In this paper, we investigate a multi-stage distributed detection, where the FC is equipped with a one-bit memory storing 
the binary-valued FC's decision at the previous time stage.
We discuss an oracle optimal algorithm under the N-P criterion and explore its structural characterization and limitation of the detection probability in the asymptotic regime. We notice that such an optimal policy is computationally inefficient, and in turn propose a low-complexity fusion algorithm, where the LR test threshold is selected in connection to the message stored in the memory. Our analysis unveils that 
the low-complexity fusion reduces the computational complexity at each stage from $O(4^n)$ to $O(n)$
but promises identical detection performance in the asymptotic regime that the oracle fusion has, in terms of detection probability and the rate of convergence thereof. In future work, a more general multi-stage distributed detection diagram, depicted in Fig.~\ref{future_work}, is of particular interest. In Fig.~\ref{future_work}, a multi-bit memory is available at the FC to store the intermediate variable $x_0^k$. The objective is to design optimal rules $x_0^k=f^k(\bm u^k,x_0^{k-1})$ and $u_0^k=g^k(x_0^k)$ to solve problem~\eqref{N-P_test_original}. We believe that the results in this paper will facilitate the investigation of the multi-bit case.

\begin{figure}[htbp]
	\centering
	\includegraphics[width=0.44\textwidth]{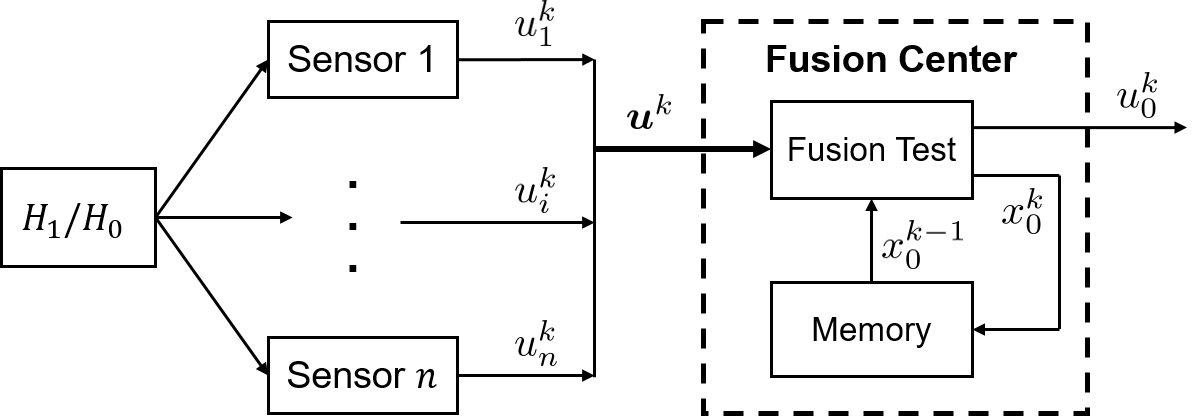}
	\caption{Diagram of a general multi-stage distributed detection with finite memory in a sensor network.}
	\label{future_work}
\end{figure} 

\appendices
\section{Proof of Lemma~\ref{ROC_curve_one_stage}} \label{appendix_ROC_curve_one_stage}
	With no loss of generality, suppose $R_1\geq R_2\geq \dots \geq R_n$. Recall that for $n$ sensors, there are $2^n$ kinds of likelihood ratios ($\Lambda$). If we set the threshold in~\eqref{LR_test_0_one_stage} as $t=\Lambda_{2^n}$, then $q_0=\lambda\prod_{i=1}^{n}q_i$ and $p_0=\lambda\prod_{i=1}^{n}p_i$, hence the slope of the most lower left segment of the ROC curve is
	\begin{equation*}
	l_1=\frac{\prod_{i=1}^{n}p_i}{\prod_{i=1}^{n}q_i}.
	\end{equation*}
	 If we set $t=\Lambda_{2^n-1}$, then $q_0=\prod_{i=1}^{n}q_i+\lambda(1-q_n)\prod_{i=1}^{n-1}q_i$ and $p_0=\prod_{i=1}^{n}p_i+\lambda(1-p_n)\prod_{i=1}^{n-1}p_i$, hence the slope of the second most lower left segment is
	 \begin{equation*}
	 l_2=\frac{(1-p_n)\prod_{i=1}^{n-1}p_i}{(1-q_n)\prod_{i=1}^{n-1}q_i},
	 \end{equation*}
	  and since $q_n<p_n$, we have $l_2<l_1$. Further, if we set $t=\Lambda_{2^n-2}$, then
	  \begin{equation*}
	  l_3=\frac{p_n(1-p_{n-1})\prod_{i=1}^{n-2}p_i}{q_n(1-q_{n-1})\prod_{i=1}^{n-2}q_i},
	  \end{equation*}
	  and since $R_{n-1} \geq R_n$, we have $l_2 \geq l_3$. Similarly, we obtain the slopes of the other segments with $l_1>l_2 \geq \dots \geq l_{2^n-1}>l_{2^n}$, which completes the proof.

\section{Proof of Theorem~\ref{oracle_performance}} \label{appendix_oracle_performance}
\textit{Part I.} Define 
\begin{equation*}
p_0^*:=\frac{\alpha \left(\prod_{i=1}^{n}R_i\right)}{1+\alpha \left(\prod_{i=1}^{n}R_i-1\right)}.
\end{equation*}
First we show that $R_0^1<\prod_{i=1}^{n} R_i$, i.e., $p_0^1<p_0^*$. When $k=1$, by Lemma~\ref{ROC_curve_one_stage}, the FC's ROC consists of $2^n$ segments with non-increasing slopes. The leftmost segment is expressed as
\begin{equation*}
p_0^1=q_0^1 \frac{\prod_{i=1}^{n}p_i}{\prod_{i=1}^{n}q_i} ,~q_0^1 \in (0,\prod_{i=1}^{n}q_i],
\end{equation*}
and the rightmost one is 
\begin{equation*}
p_0^1=1-(1-q_0^1) \frac{\prod_{i=1}^{n}(1-p_i)}{\prod_{i=1}^{n}(1-q_i)} , q_0^1 \in (1-\prod_{i=1}^{n}(1-q_i),1].
\end{equation*}
Then if we extend the two segments, they will intersect at $(p_m,q_m)$, where

\begin{equation*}\label{intersect_point}
(p_m,q_m)=\left( \frac{\prod_{i=1}^{n}R_i-Q_1}{\prod_{i=1}^{n}R_i-1},\frac{Q_2-1}{\prod_{i=1}^{n}R_i-1} \right),
\end{equation*}
where $Q_1=\frac{\prod_{i=1}^{n}p_i}{\prod_{i=1}^{n}q_i}$ and $Q_2=\frac{\prod_{i=1}^{n}(1-q_i)}{\prod_{i=1}^{n}(1-p_i)}$. To show $R_0^1<\prod_{i=1}^{n} R_i$, we consider two cases: 

\noindent (a) Case 1: If $0<\alpha \leq q_m$, let
\begin{equation*}
p_l=\alpha \frac{\prod_{i=1}^{n}p_i}{\prod_{i=1}^{n}q_i},
\end{equation*}
and define
\begin{equation*}
f(\alpha):=\frac{p_l}{1-p_l}\frac{1-\alpha}{\alpha}=\frac{(1-\alpha)(\prod_{i=1}^{n}p_i)}{(\prod_{i=1}^{n}q_i)-\alpha(\prod_{i=1}^{n}p_i)}.
\end{equation*}
Then we have $f'(\alpha)>0$ which leads to $f(\alpha) \leq f(q_m)$, i.e., $f(\alpha) \leq \prod_{i=1}^{n} R_i$ , where the equality holds only when $\alpha=q_m$. When $0<\alpha \leq \prod_{i=1}^{n} q_i$ we have $p_0^1=p_l$ and
\begin{equation*}
R_0^1=\frac{p_0^{1}}{1-p_0^{1}}\frac{1-\alpha}{\alpha}=f(\alpha)<f(q_m)=\prod_{i=1}^{n} R_i.
\end{equation*}
When $\prod_{i=1}^{n} q_i<\alpha \leq q_m$ we have $p_0^1<p_l$ and
\begin{equation*}
R_0^1=\frac{p_0^{1}}{1-p_0^{1}}\frac{1-\alpha}{\alpha}<f(\alpha)\leq f(q_m)=\prod_{i=1}^{n} R_i.
\end{equation*}

\noindent (b) Case 2: If $q_m<\alpha \leq 1$, let
\begin{equation*}
p_u=1-(1-\alpha)\frac{\prod_{i=1}^{n}(1-p_i)}{\prod_{i=1}^{n}(1-q_i)},
\end{equation*}
and define
\begin{equation*}
g(\alpha):=\frac{p_u}{1-p_u}\frac{1-\alpha}{\alpha}=1+\frac{\prod_{i=1}^{n}(1-q_i)-\prod_{i=1}^{n}(1-p_i)}{\alpha\left(\prod_{i=1}^{n}(1-p_i)\right)}.
\end{equation*} Then we have $g'(\alpha)<0$ which leads to $g(\alpha) \leq g(q_m)$, i.e., $g(\alpha) \leq \prod_{i=1}^{n} R_i$ where the equality holds only when $\alpha=q_m$. Noting that $p_0^1=p_u$ when $1-\prod_{i=1}^{n}(1-q_i)\leq \alpha \leq 1$ and $p_0^1<p_u$ when $q_m<\alpha < 1-\prod_{i=1}^{n}(1-q_i)$, similarly we have $R_0^1<g(q_m)=\prod_{i=1}^{n} R_i$. 

Next we show that $R_0^k<R_0^{k+1}<\prod_{i=1}^{n} R_i$ when $R_0^k<\prod_{i=1}^{n} R_i$. Since $R_0^k<\prod_{i=1}^{n} R_i$, by Lemma~\ref{better_performance_2} we have $p_0^k<p_0^{k+1}$, i.e., $R_0^k<R_0^{k+1}$. We note that when $R_0^{k} = \prod_{i=1}^{n} R_i$, by Lemma~\ref{better_performance_2}, $p_0^{k+1}=p_0^{k}=p_0^*$, i.e., $R_0^{k+1} = \prod_{i=1}^{n} R_i$. Then, for a smaller $R_0^{k}<\prod_{i=1}^{n} R_i$, $R_0^{k+1}<\prod_{i=1}^{n} R_i$ holds. 

\textit{Part II.} In the second part, we consider an alternative fusion algorithm, see~\eqref{LR_test_5} and~\eqref{threshold_selection} in Section~\ref{the_proposed_algorithm}, and have that by using this algorithm, $R_0^k$ converges to $\prod_{i=1}^{n} R_i$ (Theorem~\ref{possible_t1}). Due to the optimality of the oracle algorithm,~\eqref{oracle_symptotic_performance2} follows, which completes the proof.
	
\section{Proof of Theorem~\ref{convergence_sped}} \label{appendix_convergence_sped}
	The proof is divided into two parts. In the first part, we show that for both algorithms, we have the iterative form: $p_0^{k}=a^k p_0^{k-1}+b^k$. In the second part, we show that when $k$ is sufficiently large, $a^k$ and $b^k$ of both algorithms converge to the same constant $a$ and $b$ where $a \in (0,1)$. 
	
	\textit{Part I.} For the proposed algorithm, we have 
	\begin{align*}
	p_0^{k}&=P(u_0^{k-1}=1 \mid H_1)p_{0,1}+P(u_0^{k-1}=0 \mid H_1)p_{0,0}\\
	&=p_0^{k-1}p_{0,1}+(1-p_0^{k-1})p_{0,0}\\
	&=(p_{0,1}-p_{0,0})p_0^{k-1}+p_{0,0}.
	\end{align*} 
	For the oracle optimal algorithm, recall that at each stage we first sort the likelihood ratios as $\Lambda_1^k \le\Lambda_2^k \le \dots \le \Lambda_{2^{n+1}}^k$. Suppose $t^k=\Lambda_j^k$ for some $1\leq j \leq 2^{n+1}$, then we obtain 
	\begin{equation*}
	p_0^{k}=\sum_{j<i \leq 2^{n+1}} P(\Lambda_i^{k}|H_1)+\lambda^k \sum_{i=j} P(\Lambda_i^k|H_1),
	\end{equation*}  
	which can be reduced to 
	\begin{equation*}
	p_0^{k}=a^k p_0^{k-1}+b^k,
	\end{equation*}
	for some $a^k$ and $b^k$. We remark that $a^k$ and $b^k$ may change in different stages since the order of the likelihood ratios may change with the increase of $p_0^k$. 
	
	\textit{Part II.} When $k$ is large enough, for the oracle optimal algorithm, the order of the likelihood ratios $\Lambda^k(u_0^{k-1},u_1^k,\ldots,u_n^k)$ is fixed as follows:
	\begin{align*}
	\Lambda^k(1,0,\ldots,0)<\Lambda^k(0,1,\ldots,1)<\Lambda^k(1,0,\ldots,0,1) \\
	\leq \cdots<\Lambda^k(1,1,\ldots,1),
	\end{align*}
	where the first inequality holds because $R_0^{k}<\prod_{i=1}^{n} R_i$. Hence we have $\Lambda^k_{2^n}=\Lambda^k(1,0,\ldots,0)$ and $\Lambda^k_{2^n+1}=\Lambda^k(0,1,\ldots,1)$. Next we consider two cases:
	
	\noindent (a) Case 1: $\prod_{i=1}^{n} q_i<\frac{\alpha}{1-\alpha} \prod_{i=1}^{n} (1-q_i)$, i.e., $1<Q_4$. In this case, for the oracle optimal algorithm, we have $P(\Lambda^k_{2^n} \mid H_0)>P(\Lambda^k_{2^n+1} \mid H_0)$. In addition, note that 
	\begin{equation*}
	\sum_{u_1^k} \cdots \sum_{u_n^k} P\left(\Lambda^k(1,u_1^k,\ldots,u_n^k)\mid H_0 \right)=\alpha.
	\end{equation*}
	Hence $t^k=\Lambda^k_{2^n}$ and 
	\begin{align*}
	\lambda^k= & \frac{P(\Lambda^k_{2^n} \mid H_0)-P(\Lambda^k_{2^n+1} \mid H_0)}{P(\Lambda^k_{2^n} \mid H_0)} \\
	& =\frac{\alpha \prod_{i=1}^{n}(1-q_i)-(1-\alpha) \prod_{i=1}^{n}q_i}{\alpha \prod_{i=1}^{n}(1-q_i)},
	\end{align*}
	which lead to 
	\begin{align}
	p_0^k & =\left(1-(1-\lambda^k)\prod_{i=1}^{n}(1-p_i) \right) p_0^{k-1}+(1-p_0^{k-1})\prod_{i=1}^{n}p_i \notag\\
	& =a p_0^{k-1}+b, \label{a_case1}
	\end{align}
	where $a=1-\left(1+Q_3 \right) \prod_{i=1}^{n} p_i$ and $b=\prod_{i=1}^{n}p_i$. For the proposed algorithm, we have 
	\begin{equation*}
	q_{0,0} =\min \left\{\prod_{i=1}^{n} q_i,\frac{\alpha}{1-\alpha} \prod_{i=1}^{n} (1-q_i)\right\}=\prod_{i=1}^{n} q_i,
	\end{equation*}
	and 
	\begin{equation*}
	q_{0,1}=1+\prod_{i=1}^{n} q_i-\frac{\prod_{i=1}^{n} q_i}{\alpha}.
	\end{equation*}
	As illustrated in Fig.~\ref{appendix_1}, in this case, $(q_{0,0},p_{0,0})$ locates at the intersection of the first segment and the second segment of ROC curve and $(q_{0,1},p_{0,1})$ falls on the last segment. As a consequence, we have 
	\begin{align*}
	p_{0,1}-p_{0,0} & =l_{2^n}q_{0,1}+(1-l_{2^n})-l_1q_{0,0} \\
	& =\frac{\prod_{i=1}^{n}(1-p_i)}{\prod_{i=1}^{n}(1-q_i)} \left(q_{0,1}-1 \right)+1-\frac{\prod_{i=1}^{n}p_i}{\prod_{i=1}^{n}q_i}q_{0,0},
	\end{align*}
	and $p_{0,0}=\prod_{i=1}^{n} p_i$ which coincide with $a$ and $b$ in~\eqref{a_case1}. 
	
	\noindent (b) Case 2: $\prod_{i=1}^{n} q_i \geq \frac{\alpha}{1-\alpha} \prod_{i=1}^{n} (1-q_i)$, i.e., $1 \geq Q_4$. In this case, for the oracle optimal algorithm, we have $P(\Lambda^k_{2^n} \mid H_0)\leq P(\Lambda^k_{2^n+1} \mid H_0)$, hence $t^k=\Lambda^k_{2^n+1}$ and 
	\begin{equation*}
	\lambda^k= \frac{P(\Lambda^k_{2^n} \mid H_0)}{P(\Lambda^k_{2^n+1} \mid H_0)} =\frac{\alpha \prod_{i=1}^{n}(1-q_i)}{(1-\alpha) \prod_{i=1}^{n}q_i},
	\end{equation*}
	which lead to 
	\begin{align}
	p_0^k & =\left(1-\prod_{i=1}^{n}(1-p_i) \right) p_0^{k-1}+\lambda^k(1-p_0^{k-1})\prod_{i=1}^{n}p_i \notag \\
	& =a p_0^{k-1}+b, \label{a_case2}
	\end{align}
	where $a=1-\left(1+\frac{1}{Q_3} \right) \prod_{i=1}^{n} (1-p_i)$ and $b=\lambda^k \prod_{i=1}^{n}p_i$. For the proposed algorithm, we have 
	\begin{equation*}
	q_{0,0} =\min \left\{\prod_{i=1}^{n} q_i,\frac{\alpha}{1-\alpha} \prod_{i=1}^{n} (1-q_i)\right\}=\frac{\alpha}{1-\alpha} \prod_{i=1}^{n} (1-q_i),
	\end{equation*}
	and 
	\begin{equation*}
	q_{0,1}=1-\prod_{i=1}^{n} (1-q_i).
	\end{equation*}
	As illustrated in Fig.~\ref{appendix_2}, in this case, $(q_{0,1},p_{0,1})$ locates at the intersection of the last two segments of ROC curve and $(q_{0,0},p_{0,0})$ falls on the first segment. As a consequence, we have 
	\begin{align*}
	p_{0,1}-p_{0,0} & =l_{2^n}q_{0,1}+(1-l_{2^n})-l_1q_{0,0} \\
	& =\frac{\prod_{i=1}^{n}(1-p_i)}{\prod_{i=1}^{n}(1-q_i)} \left(q_{0,1}-1 \right)+1-\frac{\prod_{i=1}^{n}p_i}{\prod_{i=1}^{n}q_i}q_{0,0},
	\end{align*}
	and $p_{0,0}=\lambda^k \prod_{i=1}^{n}p_i$ which coincide with $a$ and $b$ in~\eqref{a_case2} and complete the proof. 
		\begin{figure}[htbp]
		\captionsetup{singlelinecheck = false, justification=justified}
		\captionsetup[subfigure]{justification=centering}
		\begin{subfigure}[t]{0.49\columnwidth}
			\includegraphics[width=\linewidth]{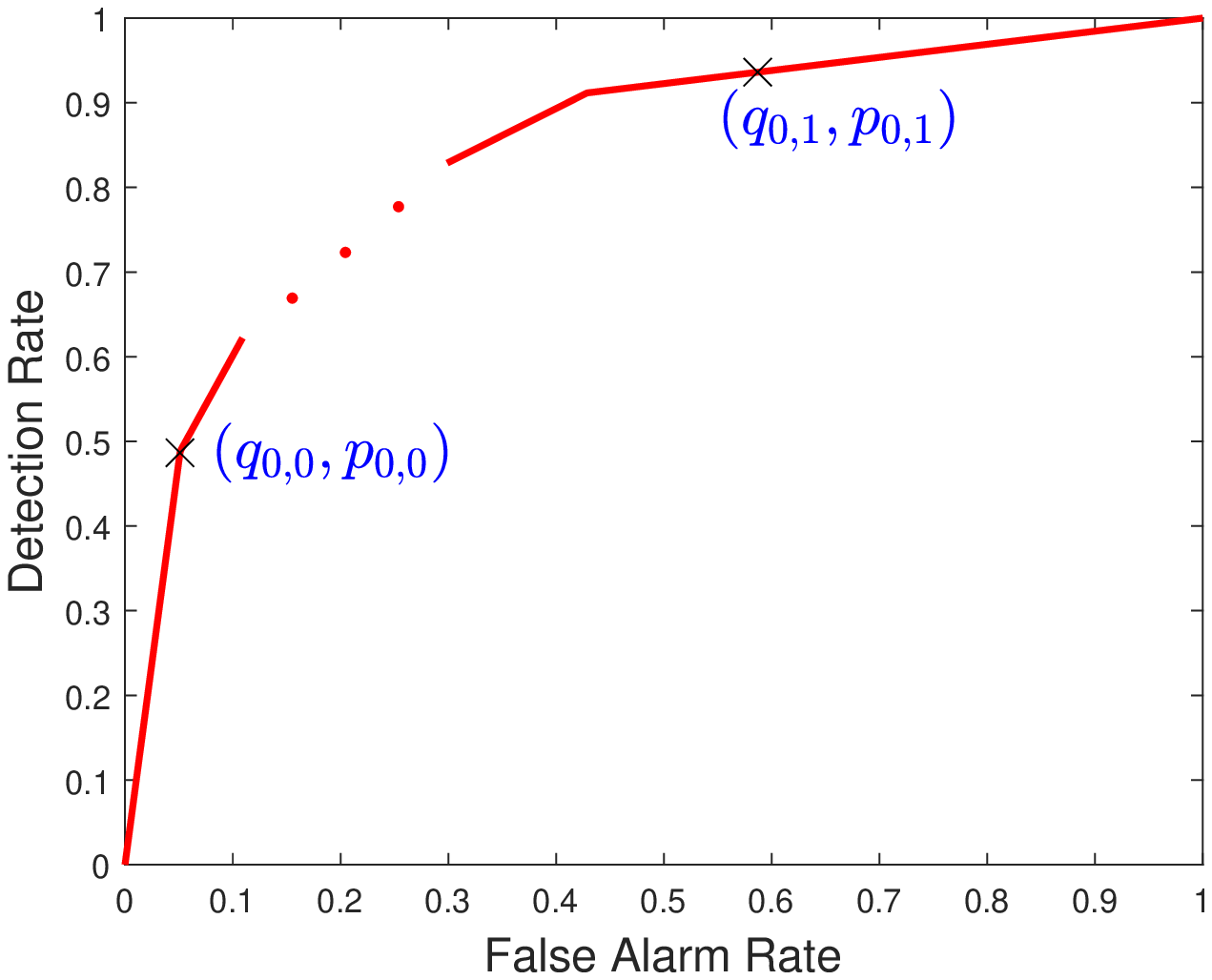}
			\caption{Case 1}
			\label{appendix_1}
		\end{subfigure}
		\hfill 
		\begin{subfigure}[t]{0.49\columnwidth}
			\includegraphics[width=\linewidth]{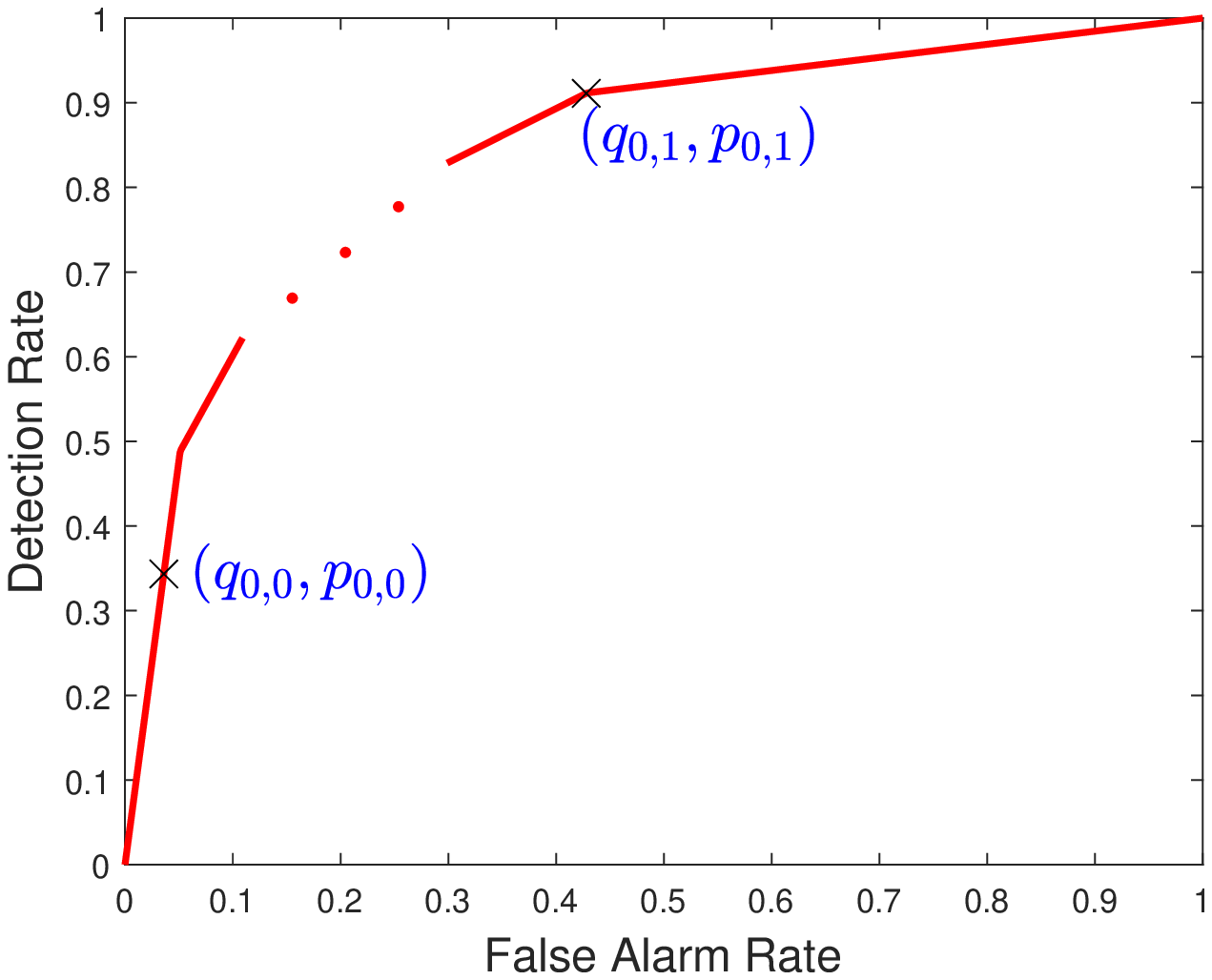}
			\caption{Case 2}
			\label{appendix_2}
		\end{subfigure}
		\caption{Thresholds of the proposed algorithm. For case 1, $(q_{0,0},p_{0,0})$ locates at the intersection of the first segment and the second segment of ROC curve. While for case 2, $(q_{0,1},p_{0,1})$ locates at the intersection of the last two segments.}
		\label{appendix}
	\end{figure}
%
%
%
%

\ifCLASSOPTIONcaptionsoff
  \newpage
\fi



%
\small
\bibliographystyle{IEEEtran}
\bibliography{Reference}

%

%
%
%




\end{document}